\newcommand\F{\mathbb{F}}
\newcommand\N{\mathbb{N}}
\newcommand\cZ{\mathcal{Z}}
\newcommand\cC{\mathcal{C}}
\newcommand\cP{\mathcal{P}}
\newcommand\pub{_{\text{pub}}}
\newcommand\secr{_{\text{sec}}}
\newcommand\red{^{\text{red}}}
\newcommand\lin{^{\text{lin}}}
\newcommand\GoppaGCD{{\tt GoppaGCD}}
\newcommand\I{\mathbb{I}}
\newcommand\hats{\eta}
\let\eps=\varepsilon
\let\rho=\varrho
\let\phi=\varphi
\def\tfrac#1#2{{\textstyle\frac{#1}{#2}}}
\newcommand\tsum{\textstyle\sum\limits}
\newcommand\tprod{\textstyle\prod\limits}
\newcommand\tr{^{tr}}
\newcommand\LT{\mathop{\rm LT}\nolimits}
\newcommand\encr{\mathop{\rm encr}}
\newcommand\decr{\mathop{\rm decr}}
\newcommand\Mat{\mathop{\rm Mat}\nolimits}
\newcommand\GL{\mathop{\rm GL}\nolimits}
\newcommand\wt{\mathop{\rm wt}\nolimits}
\keywords{Niederreiter cryptosystem, binary Goppa code,
side-channel analysis, fault attack}
\begin{document}

\title[A Fault Attack on the BIG-N Cryptosystem]{A Fault Attack on the
       Niederreiter Cryptosystem Using Binary Irreducible Goppa Codes}

\author{Julian Danner}
\address{Fakult\"at f\"ur Informatik und Mathematik, Universit\"at Passau,
D-94030 Passau, Germany}
\email{Julian.Danner@uni-passau.de}

\author{Martin Kreuzer}
\address{Fakult\"at f\"ur Informatik und Mathematik, Universit\"at Passau,
D-94030 Passau, Germany}
\email{Martin.Kreuzer@uni-passau.de}

\begin{abstract}
A fault injection framework for the decryption algorithm of the Niederreiter public-key
cryptosystem using binary irreducible Goppa codes and classical decoding techniques is described.
In particular, we obtain low-degree polynomial equations in parts of the secret key.
For the resulting system of polynomial equations, we present an efficient solving strategy
and show how to extend certain solutions to alternative secret keys.
We also provide estimates for the expected number of required fault injections,
apply the framework to state-of-the-art security levels,
and propose countermeasures against this type of fault attack.
\end{abstract}

\maketitle

%%%%%%%%%%%%%%%%%%%%%%%%%%%%%%%%%%%%%%
%
%  Section 1: Introduction
%
%%%%%%%%%%%%%%%%%%%%%%%%%%%%%%%%%%%%%%

\section{Introduction}

Many established public-key cryptosystems rely on the hardness of the
factorization problem or the discrete logarithm problem. However, their
long-term security is not guaranteed, because they can be broken in
polynomial time using sufficiently large quantum-computers~\cite{shor1999polynomial}.
This motivates the need for post-quantum cryptosystems.
One of the oldest public-key cryptosystems is the McEliece
cryptosystem~\cite{mceliece1978public} which was designed to be used by NASA.
However, partly due to its large public-key sizes, it was never standardized.
The security of the McEliece cryptosystem relies on the hardness of the
decoding problem for random linear codes~\cite{overbeck2009code}.

The Niederreiter cryptosystem is a variant of the McEliece cryptosystem
which offers some improvements as to the costs of encryption and decryption
and requires smaller public-key sizes~\cite{niederreiter1986knapsack}.
Particularly promising variants are based on the problem of decoding
binary Goppa codes~\cite{dinh2011mceliece}.

For some special subclasses of Goppa codes, namely quasi-dyadic and
quasi-cyclic Goppa codes, there exist successful algebraic attacks which
take advantage of the particular structure of the code~\cite{faugere2010algebraic}.
But apart from these subclasses, binary (irreducible) Goppa codes still appear
to resist structural attacks.
The best known algorithms for decoding arbitrary linear codes use
information set decoding, and their running time is still exponential~\cite{may2011decoding}.

Structural attacks on the Niederreiter and McEliece cryptosystems have been widely
researched. However, their side-channel analysis is not quite as advanced.
In general, any additional source of information that can be derived from a specific hardware
or software implementation of a cryptosystem, or even its execution, may be
considered a side-channel. Clearly, side-channel attacks have to be measured by
their practical feasibility. Passive side-channel analysis is frequently based
on power-analysis of hardware devices or timings of the execution
of certain parts of the encryption or decryption algorithms.

In this paper we consider a type of active side-channel attacks
called fault attacks. In particular, we assume that we are able to inject
an error into the usual flow of the algorithms of the cryptosystem by
corrupting the contents of specific memory cells at a particular moment.
Common methods of achieving such a fault injection are manipulations
of the power supply and the usage of pulsed laser or ion beams.
The usability of a fault attack depends chiefly on the fault model
which describes the required physical capabilities of the attacker.

For the Niederreiter and McEliece cryptosystems, there exist successful passive
side-channel attacks~\cite{avanzi2011side,rossi2017side,strenzke2012fast,strenzke2008side}
which exploit traditional side-channels, such as timing and power consumption attacks.
For active side-channel attacks such as fault attacks, much less seems
to be known. We found only the article~\cite{cayrel2010mceliece} which analyses
the sensitivity of these cryptosystems to fault injections in the encryption
and key generation algorithms.

However, the most natural target for a fault attack at a public-key cryptosystem is
the decryption algorithm, because it uses the secret key whose knowledge
would allow us to completely break the system. In fact, it suffices to find
an {\it alternative key}, i.e., a key which also allows us to decrypt ciphertexts.
The current paper is
a first attempt at constructing such fault attacks. Our target class of
cryptosystems are Niederreiter cryptosystems based on binary irreducible Goppa codes.
We call them briefly {\it \mbox{BIG-N} cryptosystems}. The attack is based on the following
fault model:
\begin{enumerate}
\item[(1)] The decryption algorithm follows the standard pattern:
first the error locator polynomial is computed, and then  it is evaluated
at the support elements to deduce the plaintext.

\item[(2)] The decryption algorithm not only reconstructs plaintext units
of weight~$t$, the designed distance, but also {\it illegal} plaintexts~$p$, i.e.,
plaintexts of weight $1\le \wt(p) < t$.

\item[(3)] After the error-locator polynomial has been computed,
we are able to inject a uniformly random fault into a chosen coefficient
of this polynomial, i.e., we are able to replace it by a random
bit-tuple.
\end{enumerate}
Moreover, we assume that we are able to repeat these injections
hundreds or even several thousand times. The resulting \mbox{BIG-N} fault attack
breaks all state-of-the-art security levels~\cite{bernstein2008attacking},
even ``long term'' secure ones, within minutes. Consequently, we suggest
explicit countermeasures for hardware- or software-implementations
of \mbox{BIG-N} cryptosystems.

{\it Feasibility and Applicability.} Let us briefly discuss the practical applicability
and relevance of the new fault attack. As we shall see in Section~\ref{sec:timings},
for carrying out the actual fault injections, we need to hit the register
holding a certain 10-13 bit wide coefficient at a specified point in time.
Using modern equipment, this is a mild requirement (see for instance~\cite{Joye2012}
and~\cite{Breier2019}).

Thus it seems more pertinent to discuss the vulnerability
of current cryptographic \mbox{BIG-N} schemes to the proposed attacks.
Looking through the NIST Post-Quantum Standardization candidates, there are
three original submissions based on Goppa codes: ``BIG Quake'' which is
vulnerable to our attack, ``Classic McEliece'', and ``NTS-KEM''. The latter two
are actually key exchange systems where the key cannot be chosen freely,
and for the random vector a hash value is transmitted. This is clearly a setting
in which the current attack does not apply directly. However, the reference
implementation~\cite{wang2018fpga} for the \mbox{BIG-N} part of the schemes is
not protected against the attack. Furthermore, the other published
reference implementations~\cite{heyse2013code} and~\cite{hu2018} of the \mbox{BIG-N}
cryptosystem use constant weight encoders and decoders whose standard implementation
is also vulnerable. Thus, although it may not be very difficult to defend
against, apparently all current implementations and applications of the
\mbox{BIG-N} cryptosystem are vulnerable to the \mbox{BIG-N} fault attack.

{\it Contents.} Let us describe the structure of the paper in more detail.
After recalling some basic facts about binary Goppa codes and
\mbox{BIG-N} cryptosystems in Sections~2 and~3, we present a \mbox{BIG-N} fault
attack framework in Section~4. In particular, we analyze the
assumptions underlying the attack carefully and propose countermeasures.
Then we introduce the general framework for the fault attack:
we assume that we are able to replace the error locator polynomial
$\sigma_e(x)$ by an erroneous one of the form $\tilde{\sigma}_e(x)=
\eps x^d + \sigma_e(x)$ where $\eps\in \F_{2^m}$ is distributed
uniformly at random and~$d$ is the chosen degree under attack.

In Section~5 we analyze the resulting equations for the
components of the support vector $\alpha=(\alpha_1,\dots,\alpha_n)$
of the binary Goppa code in two particular cases:
we attack the constant and the quadratic coefficient of~$\sigma_e(x)$.
From a successful constant injection we derive a linear equation
for the components of~$\alpha$, and from a successful quadratic
injection we get a linear or a quadratic equation. However, this
typically requires a sequence of injections until we succeed in
obtaining an erroneously deciphered word of weight two.

The next steps are taken in Section~6 where we combine the
acquired linear and quadratic equations into a fault equation system
and then carry out the actual \mbox{BIG-N} fault attack in three steps:
Firstly, we solve the fault equation system and get a set of support
candidates. Secondly, using the fact that it suffices to find the
support and the Goppa polynomial of a larger binary Goppa code
containing the publicly known one, we determine a support candidate which
can be extended. Finally, we use this alternative secret key to
break the given \mbox{BIG-N} cryptosystem.

In the final section we offer some experiments and timings
for the \mbox{BIG-N} fault attack. In particular, we provide estimates
for the average numbers of constant and quadratic fault injections
needed to succeed. Moreover, we collect the timings for
breaking various security levels, ranging from one minute for
``short term'' 60-bit security to about 25 minutes for
``long term'' 266-bit security.

%%%%%%%%%%%%%%%%%%%%%%%%%%%%%%%%%%%%%%%%%%%%%%%%%%%
%
% Section 2: Binary Goppa Codes
%
%%%%%%%%%%%%%%%%%%%%%%%%%%%%%%%%%%%%%%%%%%%%%%%%%%%

\section{Binary Goppa Codes}

For starters, let us recall the definition of a binary Goppa code.
By $\F_{2^m}$ we denote the finite field having $2^m$ elements.
We also fix the following notation:
given a tuple $c=(c_1,\dots,c_n)\in\F_2^n$, we let
$\I_c:=\{i\in\{1,\dots,n\}\mid c_i=1\}$.

\begin{defi}
Let $m,t,n\in\N_+$ such that $mt<n\leq 2^m$.
\begin{enumerate}
\item[(a)] A tuple $\alpha=(\alpha_1,\dots,\alpha_n)\in\F_{2^m}^n$ such that
$\alpha_i\neq\alpha_j$ for $i\neq j$ will be called a {\bf support tuple}.

\item[(b)] A polynomial $g\in\F_{2^m}[x]$ with $\deg(g)=t$ and
$g(\alpha_i)\neq0$ for $i\in\{1,\dots,n\}$ is called a {\bf Goppa polynomial}
for the support tuple~$\alpha$.

\item[(c)] The {\bf (binary) Goppa code} with the generating pair $(\alpha,g)$
is given by
\[
\Gamma(\alpha,g)=\{c\in\F_2^n\mid \tsum_{i\in \I_c} (x-\alpha_i)^{-1} = 0 \text{ in }
\F_{2^m}[x]/\langle g\rangle \}
\]
In particular, if~$g$ is an irreducible polynomial, then $\Gamma(\alpha,g)$ is
called an {\bf irreducible Goppa code}.
\end{enumerate}
\end{defi}

\begin{rem}\label{rmk:suppcodegoppapoly}
For a Goppa code $C=\Gamma(\alpha,g)$ and for $c=(c_1,\dots,c_n)\in\F_2^n$, we have
\begin{align*}
  c\in C &\quad\Leftrightarrow\quad \tsum_{i\in \I_c}(x-\alpha_i)^{-1} = 0
                                             \tag{in $\F_{2^m}[x]/\langle g\rangle $}\\
         &\quad\Leftrightarrow\quad g \;\mid\; \tsum_{i\in \I_c} \tprod_{j\in \I_c\setminus\{i\}}
         (x-\alpha_j)          \tag{in $\F_{2^m}[x]$}
\end{align*}
\end{rem}

\begin{rem}\label{rmk:paritycheckmat}
Given a Goppa code $C=\Gamma(\alpha,g)$ it is well-known that a parity-check matrix
$H\in\Mat_{mt\times n}(\F_2)$ of~$C$ can be obtained from the matrix
\[
H' \;=\;\begin{pmatrix}
\beta_1 & \cdots & \beta_n\\
\alpha_1 \beta_1 & \cdots & \alpha_n \beta_n\\
\vdots & & \vdots\\
\alpha_1^{t-1} \beta_1 & \cdots & \alpha_n^{t-1} \beta_n
\end{pmatrix}  \;\in\;  \Mat_{t,n}(\F_{2^m})
\]
where $\beta_i=g(\alpha_i)^{-1}$ for $i=1,\dots,n$,
by replacing each entry of~$H'$ by a column of~$m$ bits
that arise by fixing an $\F_2$-basis of~$\F_{2^m}$.
\end{rem}

From here on, let $C=\Gamma(\alpha,g)$ be a binary Goppa code
with parameters $m,t,n\in\N_+$ as described above,
and let $H\in\Mat_{mt,n}(\F_2)$ be a parity-check matrix of~$C$.
For $\tilde{c}\in\F_2^n$, we call $s_{\tilde c} = \tilde{c} H\tr
\in\F_2^{mt}$ the {\bf syndrome} of~$\tilde{c}$ with respect to~$H$.
Then we have $\tilde{c}\in C$ if and only if $\tilde{c} H\tr=0$.

It is known that $\dim C\ge n-mt$ and that the minimal distance of~$C$ satisfies
$d_{\min}(C)\geq t$. If~$C$ is irreducible, we even have $C=\Gamma(\alpha,g^2)$
and $d_{\min}(C)\geq 2t$. Hence, in general, up to~$\frac{t}{2}$ errors can be corrected,
and if $C$ is irreducible, even up to~$t$ errors can be corrected uniquely.

Since we are going to use it extensively, let us briefly recall the classical syndrome
decoding method for binary Goppa codes.

\begin{rem}{\bf (Sydrome Decoding for Goppa Codes)}\\
Consider a received word $\tilde c=c+e\in\F_2^n$ with $c\in C$ and $e\in\F_2^n$.
Then we define the {\bf error-locator polynomial} by
\[
\sigma_e(x) \;=\; \tprod_{i\in \I_e} \, (x-\alpha_i) \;\in\; \F_{2^m}[x]
\]
and the {\bf syndrome polynomial} of $\tilde{c} = (\tilde{c}_1,\dots,\tilde{c}_n)\in\F_2^n$ by
\[
s_{\tilde c}(x) \;=\; \tsum_{i=1}^n \tfrac{\tilde c_i}{g(\alpha_i)}
\tfrac{g(x)-g(\alpha_i)}{x-\alpha_i}   \;\in\;  \F_{2^m}[x]
\]
Then we have $s_e(x)\equiv s_{\tilde c}(x)\mod g(x)$, and we obtain the {\bf key equation}
\[
\sigma_e(x)\cdot s_{\tilde c}(x) \;\equiv\; \sigma'_e(x)\mod g(x)
\]
Let us write $g(x) = g_tx^t+\dots+g_1x+g_0$ with $g_0,\dots,g_t\in \F_{2^m}$.
Using the same $\F_2$-basis of~$\F_{2^m}$ as in Remark~\ref{rmk:paritycheckmat},
we combine sequences of~$m$ entries in the syndrome $s_{\tilde c}
=\tilde{c} H\tr = e H\tr \in\F_2^{mt}$ and get $\hat{s}\in\F_{2^m}^t$.
Now the coefficients of the syndrome polynomial $s_{\tilde c}(x)$ can be computed
by a simple multiplication of $\hat{s}\in\F_{2^m}^t$ with the matrix
\[
S_g \;=\; \begin{pmatrix}
    g_t    & g_{t-1} & \cdots  & g_1 \\
           & \ddots  & \ddots  & \vdots\\
           &         & g_t     & g_{t-1}\\
           &         &         & g_t
\end{pmatrix}  \;\in\; \GL_t(\F_{2^m})
\]
Hence it is sufficient to solve the key equation for the error-locator polynomial $\sigma_e(x)$.
Then the zeros of $\sigma_e(x)$ determine the error vector $e\in\F_2^n$ via the
oberservation that, for $i\in\{1,\dots,n\}$, we have $e_i=1$ if and only if $\sigma_e(\alpha_i)=0$.
Finally, we decode~$\tilde c$ to $c=\tilde c+e$.
\end{rem}

The main task in this method is to solve the key equation. This can be done in several ways,
for instance explicitly using the Sugiyama-Algorithm~\cite{sugiyama1975method},
or implicitly using the Berlekamp-Massey Algorithm~\cite{berlekamp1966non,massey1969shift}.
As shown in~\cite{Dornstetter1987equiv}, one may consider these two algorithms
as essentially equivalent.
For a binary irreducible Goppa code~$C$, up to~$t$ errors can be corrected
using the Patterson Algorithm~\cite{patterson1975algebraic}
which also uses the key equation to obtain the error-locator polynomial.
Independently of the chosen method, decoding consists of the following
three basic steps.

\begin{algo}{\bf (The Syndrome Decoding Algorithm)}\label{alg:syndromedecoding}\\
Let $C=\Gamma(\alpha,g)\subseteq\F_2^n$ be a binary Goppa code, where
$\alpha=(\alpha_1,\dots,\alpha_n)\in\F_{2^m}^n$ and $g\in\F_{2^m}[x]$,
and let $t=\deg(g)$.
Let $H\in\Mat_{mt\times n}(\F_2)$ be a parity check matrix of~$C$ obtained
as in Remark~\ref{rmk:paritycheckmat}, and let $s_{\tilde c}=\tilde{c} H\tr \ne 0$
be the syndrome of a given received word $\tilde{c}\in \F^n \setminus C$.
We assume that $\tilde{c}$ is of the form $\tilde{c}=c+e$ with $c\in C$ and
$e\in \F_2^n$ such that $\#\I_e \le \frac{t}{2}$ (or $\#\I_e\le t$, if~$C$
is irreducible). Then we compute~$e\in \F_2^n$ from $s_{\tilde c}$
using the following steps.
\begin{enumerate}
\item[(1)] Compute the syndrome polynomial $s_{\tilde c}(x)\in\F_{2^m}[x]$
from $s_{\tilde c}$.

\item[(2)] Compute the error-locator polynomial $\sigma_e(x)$ by solving the
key equation (explicitly or implicitly), e.g., via one of the cited algorithms.

\item[(3)] For $i=1,\dots,n$, let $e_i=1$ if $\sigma_e(\alpha_i)=0$
and $e_i=0$ otherwise. Return $e=(e_1,\dots,e_n)\in\F_2^n$ and stop.
\end{enumerate}
\end{algo}

%%%%%%%%%%%%%%%%%%%%%%%%%%%%%%%%%%%%%%
%
%  Section 3: BIG-N Cryptosystems
%
%%%%%%%%%%%%%%%%%%%%%%%%%%%%%%%%%%%%%%

\section{BIG-N Cryptosystems}\label{sec:bignc}

In this section we introduce Niederreiter cryptosystems using binary irreducible Goppa codes
(see~\cite{niederreiter1986knapsack}) and recall some of their basic properties.
Subsequently, we write $W_{n,t}$ for the set of all elements $v\in \F_2^n$ of {\bf Hamming
weight}~$t$, i.e., such that $\wt(v)=\#\I_v =t$.

\begin{defi}{\bf (BIG-N Cryptosystems)}\label{defi:nrirred}\\
A {\bf Niederreiter cryptosystem using a binary irreducible Goppa code~$C$ (\mbox{BIG-N} cryptosystem)}
with parameters $m,t,n\in\N_+$ such that $mt<n\leq 2^m$ is represented by
a tuple $(K\secr,K\pub,\cP,\cC,\encr,\decr)$ consisting of the following parts.
\begin{enumerate}
\item[(a)] The tuple $K\secr=(S,H,P,\alpha,g)$, where $H\in\Mat_{mt\times n}(\F_2)$ is a
parity check matrix of the irreducible Goppa code $C=\Gamma(\alpha,g)$ with $\deg(g)=t$
and dimension $n-mt$, where $P\in\Mat_n(\F_2)$ is
a permutation matrix, and where $S\in\GL_{mt}(\F_2)$, is called the {\bf secret key}.

\item[(b)] The tuple $K\pub=(m,t,n,H\pub)$, where $m,t,n$ are the parameters of~$C$,
and where $H\pub=S\cdot H\cdot P\in\Mat_{mt,n}(\F_2)$,
is called the {\bf public key}.
The matrix $H\pub$ is also called the {\bf public parity check matrix}.

\item[(c)] The set $\cP= W_{n,t}$ is called the {\bf plaintext space},
and an element $p\in\cP$ is called a {\bf plaintext unit}.

\item[(d)] The set $\cC=\{pH\pub\tr\mid p\in\cP\}$ is called the {\bf ciphertext space},
and an element $c\in\cC$ is called a {\bf ciphertext unit}.

\item[(e)] The map $\encr\colon\cP \longrightarrow \cC$ given by $\encr(p)=pH\pub\tr$
is called the {\bf encryption map}.

\item[(f)] The map $\decr\colon\cC \longrightarrow \cP$ given by
$\decr(c)=\gamma(c\cdot(S\tr)^{-1})\cdot(P\tr)^{-1}$
is called the {\bf decryption map}. Here the map $\gamma:\; \F_2^{mt}\longrightarrow\F_2^n$
satisfies $\gamma(e\,H\tr)=e$ for all $e\in W_{n,t}$ and is a syndrome decoding algorithm
which corrects up to~$t$ errors.
\end{enumerate}
\end{defi}

In particular, notice that we have $\decr(\encr(p))=p$ for all $p\in\cP$.
Indeed, for a plaintext unit $p\in\cP$ and its ciphertext $c=\encr(p)=pH\pub\tr$,
we have $\wt(pP\tr) = \wt(p) = t$, and therefore $\gamma(c\,(S\tr)^{-1}) =
\gamma((p\,P\tr)H\tr)=p\,P\tr$. Hence we get $\decr(c)=p$, as claimed.

A \mbox{BIG-N} cryptosystem is a public-key cryptosystem, i.e., the encryption map $\encr$
can be computed solely using the public key $K\pub$, but the application of
the decryption map $\decr$ requires the knowledge of the secret key $K\secr$.
The cryptosystem $(K\secr,K\pub,\cP,\cC,\encr,\decr)$ is considered broken
if one can efficiently compute a fast {\bf alternative decryption map}
$\decr':\; \cC \longrightarrow \cP$ which satisfies $\decr'(\encr(p))=p$ for all $p\in\cP$.
If, for an alternative decryption map~$\decr'$, there exists a secret key $K\secr'$
such that $(K\secr',K\pub,\cP,\cC,\encr,\decr')$ is a \mbox{BIG-N} cryptosystem,
then $K\secr'$ is called an {\bf alternative secret key}.

In the following we let $m,t,n\in\N_+$ such that $mt<n\leq 2^m$, we let
$C=\Gamma(\alpha,g)$ be an irreducible Goppa code with these parameters, and we
let $(K\pub,K\secr,\cP,\cC,\encr,\decr)$ be a \mbox{BIG-N} cryptosystem using~$C$.
Moreover, we fix an $\F_2$-basis of~$\F_{2^m}$ and use it to convert
elements of~$\F_{2^m}$ to $m$-tuples of bits and vice versa.

\begin{rem}{\bf (Simplified \mbox{BIG-N} Cryptosystems)}\label{rmk:simplify}\\
In the above setting, let $H\in\Mat_{mt,n}(\F_2)$ be the parity check matrix
for~$C$ obtained as in Remark~\ref{rmk:paritycheckmat}, and let
$P\in\Mat_n(\F_2)$ be the permutation matrix contained in the secret key.
Then we define $\tilde{\alpha} := \alpha\cdot P\in\F_{2^m}^n$ and
$\tilde{C}=\Gamma(\tilde\alpha,g)$. Clearly, also~$\tilde{C}$ is a binary
irreducible Goppa code, and the matrix $\tilde{H} :=H\cdot P$ is a parity check matrix
for~$\tilde{C}$ which has the shape described in Remark~\ref{rmk:paritycheckmat}.

Since the public parity check matrix satisfies $H\pub=S\cdot \tilde{H}$,
an alternative secret key for the original cryptosystem is given by
$(S,\tilde H,I_n,\tilde\alpha,g)$, where $I_n$ is the identity matrix of size~$n$.
Consequently, we may use the Goppa code~$\tilde{C}$ instead of~$C$ and
get rid of the permutation matrix~$P$ as part of the secret key.

From here on we use secret keys of the form $K\secr=(S,H,\alpha,g)$
and simplify all \mbox{BIG-N} cryptosystems accordingly.
Notice that the code defined by the public parity check matrix~$H\pub$
of the simplified cryptosystem is now equal to~$C$
and continues to be publicly known.
\end{rem}

In view of the preceding section, we know that a \mbox{BIG-N} cryptosystem is broken
if an efficient $t$-error correcting algorithm is found for the Goppa code~$C$.
The next algorithm shows that this requirement can be weakened even further.

\begin{algo}{\bf (An Alternative Decryption Algorithm)}\label{alg:altkey}\\
Consider a \mbox{BIG-N} cryptosystem $(K\secr,K\pub,\cP,\cC,\encr,\decr)$ using the
Goppa code $C=\Gamma(\alpha,g)$ and having the public key $K\pub=(m,t,n,H\pub)$.
Assume that there exist a support tuple $\tilde{\alpha}\in\F_{2^m}^n$
and a Goppa polynomial $\tilde{g}\in\F_{2^m}[x]$ for~$\tilde{\alpha}$
such that $\tilde{t}:=\deg(\tilde{g})\ge 2t$ and
$C \subseteq \tilde{C} := \Gamma(\tilde{\alpha},\tilde{g})$.
Given a ciphertext unit $c\in\cC$, consider the following sequence of instructions.
\begin{enumerate}
\item[(1)] Compute a parity check matrix $\tilde{H} \in\Mat_{m\tilde{t},n}(\F_2)$
for~$\tilde{C}$ as in Remark~\ref{rmk:paritycheckmat}.

\item[(2)] Compute the matrix $\tilde{S}\in\Mat_{m\tilde{t},mt}(\F_2)$ such that
$\tilde{S}\cdot H\pub=\tilde{H}$.

\item[(3)] Compute $\tilde{c} = c\, \tilde{S}\tr$ and apply the syndrome decoding
algorithm of~$\tilde{C}$ to~$\tilde{c}$. Return the resulting tuple~$p$.
\end{enumerate}
This is an algorithm which computes a plaintext unit $p\in\cP$
such that we have $\encr(p)=c$.
\end{algo}

\begin{proof}
First notice that $C \subseteq \tilde{C}^\perp$ implies
$\tilde{C}^\perp\subseteq C^\perp$. Since the rows of~$\tilde{H}$ generate~$\tilde{C}^\perp$
and the rows of~$H\pub$ generate $C^\perp$, it follows that there
exists a matrix $\tilde{S}\in\Mat_{m\tilde{t},mt}(\F_2)$ such that
$\tilde{S}\cdot H\pub=\tilde{H}$, and this matrix can computed in step~(2).
Furthermore, the input for Algorithm~\ref{alg:syndromedecoding} in step~(3) is correct,
because $c=\encr(p)$ for some $p\in\cP$, and therefore $\tilde{c} = c\, \tilde{S}\tr
=p\, H\pub\tr \tilde{S}\tr = p\,\tilde H\tr$ is a syndrome of weight $\wt(p)=t$
with respect to~$\tilde{H}$. Consequently, the algorithm can be executed.

Its finiteness is clear. Its correctness follows from the fact that
Algorithm~\ref{alg:syndromedecoding} corrects up to $\tfrac{\deg(\tilde{g})}{2} =
\tfrac{\tilde{t}}{2}\ge t = \wt(p)$ errors, and hence determines $p\in\cP$ correctly.
\end{proof}

In view of this algorithm, it is clear that a \mbox{BIG-N} cryptosystem using the code
$C=\Gamma(\alpha,g)$ is broken if a generating pair $(\tilde{\alpha},\tilde{g})$
for a binary Goppa code is found such that $\deg(\tilde{g}) \ge 2t$ and such that
$C\subseteq \tilde{C}=\Gamma(\tilde{\alpha},\tilde{g})$. Therefore
such a pair $(\tilde{\alpha},\tilde{g})$ will be called an {\bf alternative secret pair}.

%%%%%%%%%%%%%%%%%%%%%%%%%%%%%%%%%%%%%%%%%%%%%%%%%%%
%
%  Section 4: The BIG-N Fault Injection Framework
%
%%%%%%%%%%%%%%%%%%%%%%%%%%%%%%%%%%%%%%%%%%%%%%%%%%%

\section{The BIG-N Fault Injection Framework}\label{sec:finjfw}

In this section we let $m,t,n\in\N_+$ such that $mt<n\leq 2^m$, we let
$C=\Gamma(\alpha,g)$ be a binary irreducible Goppa code with these parameters, and we
let $(K\pub,K\secr,\cP,\cC,\encr,\decr)$ be a \mbox{BIG-N} cryptosystem using~$C$.
Moreover, we write $K\pub=(m,t,n,H\pub)$ for the public key and
$K\secr=(S,H,\alpha,g)$ for the secret key. Using Remark~\ref{rmk:simplify},
we assume that $H\pub=S\cdot H$. Recall that we let
$\I_p = \{i\in\{1,\dots,n\}\mid p_i=1\}$ for $p=(p_1,\dots,p_n)\in\F_2^n$.

In order to mount the proposed fault attack, we require that the implementation
of the decryption map $\decr$ satisfies three assumptions. They are motivated
by the following {\it usual} implementation which is based on the classical
syndrome decoding method of Algorithm~\ref{alg:syndromedecoding}.

\begin{algo}{\bf (Implementing the Decryption Map $\decr$)}\label{alg:decrypt}

{\bf Input:} a ciphertext unit $c=\encr(p)\in\cC$, the secret key $K\secr=(S,H,\alpha,g)$

{\bf Output:} a plaintext unit $p\in\cP$

\begin{algorithmic}[1]
  \STATE Compute the syndrome $s=c(S\tr)^{-1}\in\F_2^{mt}$ with respect to~$H$.
  \STATE Compute the syndrome polynomial $s_{p}(x)$.
  \STATE Compute the error-locator polynomial $\sigma_p(x)$.
  \STATE Determine $p=(p_1,\dots,p_n)\in\cP$ by setting $p_i=1$ if
         $\sigma_p(\alpha_i)=0$ and $p_i=0$ otherwise.
  \RETURN{$p$}
\end{algorithmic}
\end{algo}

In view of this algorithm, the following assumption seems natural.

\begin{asm}\label{ass:1}
The implementation of the decryption map $\decr$ makes use of the error-locator polynomial
in such a way that it is first computed explicitly, then evaluated at the support elements,
and finally the resulting plaintext unit is returned.
\end{asm}

Next we let $p\in\F_2^n$ with $0 < \wt(p)<t$ and $c=p\, H\pub\tr$.
Since Algorithm~\ref{alg:decrypt} is based on the Syndrome Decoding
Algorithm~\ref{alg:syndromedecoding}, and since syndrome decoding
corrects {\it up to}~$t$ errors, applying this algorithm to~$c$
will correctly return~$p$. Hence we make the following assumption.

\begin{asm}\label{ass:2}\label{ass:assumptions}
Let $p\in\F_2^n$ with $0 < \wt(p)\le t$, and let $c=pH\pub\tr$.
If we apply the decryption map $\decr$ to~$c$, it returns~$p$.
\end{asm}

In order to be able to inject a fault into the decryption process, we
require one final assumption.

\begin{asm}\label{ass:3}
Let $d\in\N$ with $d<t$, let $p\in\F_2^n$ with $0 < \wt(p) \le t$, and let
$c=p\,H\pub\tr$. After the error-locator polynomial $\sigma_p(x)$
has been computed during the decryption process of~$c$,
we assume that we can inject a uniformly random fault into the $d$-th
coefficient of~$\sigma_p(x)$. In other words, we assume that we
may replace $\sigma_p(x)$ by a polynomial $\tilde{\sigma}_p(x)=\eps x^d+\sigma_p(x)$
where $\eps \in\F_{2^m}$ is chosen uniformly at random.
\end{asm}

As a testimony to the applicability of these assumptions, consider the hardware
implementation described in~\cite{wang2018fpga}.
It follows the steps of Algorithm~\ref{alg:decrypt}, and after $\sigma_p(x)$ has
been computed, this polynomial is transferred to the evaluation module
by sending $m(t+1)$ bits, where each $m$-bit subtuple
represents one of coefficients of~$\sigma_p(x)$.
This transfer process is suited for an injection of a (uniformly distributed) random
error $\eps\in\F_{2^m}$ into the $d$-th coefficient of $\sigma_p(x)$.
From a hardware point of view, it corresponds to randomly changing the state of~$m$
chosen consecutive bits.

Every implementation of the decryption map that satisfies these assumptions is vulnerable
to the following fault injection framework.

\begin{algo}{\bf (The \mbox{BIG-N} Fault Injection Framework)}\label{alg:faframe}\\
For a \mbox{BIG-N} cryptosystem $(K\secr,K\pub,\cC,\cP,\encr,\decr)$ as above, assume
that the implementation of the decryption map~$\decr$
satisfies Assumptions~\ref{ass:1}, \ref{ass:2}, and~\ref{ass:3}.
Choose a word $p\in\F_2^n$ with $0 < \wt(p) \le t$ and a number $d\in\N$ with $d<t$.
Then consider the following sequence of instructions.

\begin{enumerate}
\item[(1)] Compute $c=pH\pub\tr\in\F_2^{mt}$.

\item[(2)] Start the decryption algorithm $\decr$ with input $c\in\F_2^{mt}$
and inject a uniformly random fault $\eps\in\F_{2^m}$ in the $d$-th coefficient
of $\sigma_p(x)$ such that $\tilde\sigma_p(x)=\eps x^d+\sigma_p(x)$ is evaluated
instead of $\sigma_p(x)$.

\item[(3)] Return $\tilde p\in\F_2^n$, the output of the faulty decryption of step~(2).
\end{enumerate}
This is an algorithm which returns a tuple $\tilde{p} = (\tilde{p}_1,\dots,\tilde{p}_n)
\in\F_2^n$ such that, for $i\in \{1,\dots,n\}$, we have $\tilde{p}_i=1$
if and only if $\eps \alpha_i^d + \prod_{j\in \I_p}(\alpha_i-\alpha_j) =0$.

Consequently, every component $\tilde{p}_i=1$ yields a polynomial equation in
$\F_{2^m}[x_0,\dots,x_n]$ which is satisfied for $(\eps,\alpha_1,\dots,\alpha_n)$
\end{algo}

\begin{proof}
By Assumption~\ref{ass:2}, the decryption algorithm is correct for all
syndromes $c=p\, H\pub\tr$ with $p\in\F_2^n$ and $0 < \wt(p) \le t$.
In combination with Assumption~\ref{ass:1}, this means that in the course of the
decryption algorithm, the error-locator polynomial $\sigma_p(x)$ is computed
correctly. By Assumption~\ref{ass:1}, the output $\tilde p=(\tilde{p}_1,\dots,\tilde{p}_n)
\in\F_2^n$ satisfies $\tilde p_i=1$ if and only if $\tilde\sigma_p(\alpha_i)=0$.
Since we have $\tilde{\sigma}_p(x) = \eps x^d + \sigma_p(x)$ and
$\sigma_p(x)=\prod_{i\in \I_p}(x-\alpha_i)$, the claim follows.
\end{proof}

In the setting of this framework, we call the triple $(p,d,\tilde{p})$
a {\bf \mbox{BIG-N} fault injection} in degree~$d$. We also say that this injection
{\bf uses the fault}~$\eps$. The possibility to perform \mbox{BIG-N} fault injections
can be prevented as follows.

\begin{rem}{\bf (Countermeasures)}\label{rmk:countermeasures}\\
Let $(p,d,\tilde{p})$ be a \mbox{BIG-N} fault injection.
\begin{enumerate}
\item[(a)] The output of the decryption map of a \mbox{BIG-N} cryptosystem
is an $n$-bit tuple of weight~$t$. In general, for a \mbox{BIG-N} fault injection
$(p,d,\tilde{p})$, the output~$\tilde{p}$ will have weight $\le t$.
Thus checking the weight of~$\tilde{p}$ discovers most fault injections.

\item[(b)] A further way to detect fault injections is to
re-encrypt the output~$\tilde{p}$. If$\tilde{p}\ne p$, we will
get $\tilde{p}\,H\pub\tr \ne c$.
\end{enumerate}
\end{rem}

The next proposition collects some observations on \mbox{BIG-N} fault injections.

\begin{prop}\label{prop:faulteq2}
Let $(p,d,\tilde p)$ be a \mbox{BIG-N} fault injection which uses the fault $\eps\in\F_{2^m}$,
and let $\tilde{\sigma}_p(x)=\eps x^d+\sigma_p(x)$.
Then the following claims hold.
\begin{enumerate}
\item[(a)] If $\wt(\tilde p) > \wt(p)$ then $\eps\ne 0$.

\item[(b)] We have either $p=\tilde{p}$ or $\# (\I_p\cap \I_{\tilde p}) = 1$.

\item[(c)] If $\eps\ne 0$ then $\I_p\cap \I_{\tilde p}=\{i\}$ is equivalent to
$i\in \I_p$, $\alpha_i=0$, and $d>0$.
\end{enumerate}
\end{prop}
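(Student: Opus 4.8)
```latex
\textbf{Plan of proof.}
The starting point for all three parts is the explicit description of
$\tilde p$ from Algorithm~\ref{alg:faframe}: for each $i\in\{1,\dots,n\}$ we
have $\tilde p_i=1$ if and only if $\tilde\sigma_p(\alpha_i)=0$, i.e. if and
only if $\eps\,\alpha_i^d = \prod_{j\in \I_p}(\alpha_i-\alpha_j)$, while
$p_i=1$ if and only if $\alpha_i\in\{\alpha_j\mid j\in\I_p\}$, i.e. if and only
if $\prod_{j\in \I_p}(\alpha_i-\alpha_j)=0$. So membership in $\I_p$,
$\I_{\tilde p}$ is controlled entirely by whether these two polynomial
expressions vanish at the (pairwise distinct) support points $\alpha_i$.

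For (a), I would argue by contraposition: if $\eps=0$ then
$\tilde\sigma_p(x)=\sigma_p(x)$, hence $\tilde p=p$ and in particular
$\wt(\tilde p)=\wt(p)$, which is not $>\wt(p)$. For (b), the case $\eps=0$ again
gives $p=\tilde p$ directly, so assume $\eps\ne0$ and suppose
$i\ne i'$ both lie in $\I_p\cap\I_{\tilde p}$. Then
$\prod_{j\in\I_p}(\alpha_i-\alpha_j)=0$ and the same for $\alpha_{i'}$
(membership in $\I_p$), and $\eps\alpha_i^d=0$, $\eps\alpha_{i'}^d=0$
(membership in $\I_{\tilde p}$ using $\tilde\sigma_p(\alpha_i)=0$ together with
the $\I_p$-relation). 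Since $\eps\ne0$ this forces $\alpha_i^d=\alpha_{i'}^d=0$,
hence $\alpha_i=\alpha_{i'}=0$, contradicting that the $\alpha_j$ are pairwise
distinct. Thus $\#(\I_p\cap\I_{\tilde p})\le1$, and since $\eps\ne0$ and
$\alpha_i=0$ can happen for at most one index we cannot have
$\I_p\cap\I_{\tilde p}=\emptyset$ rule out; but actually the clean statement is
just the dichotomy, so I only need $\le1$ in the $\eps\ne0$ case and
$p=\tilde p$ in the $\eps=0$ case. (One should double-check the degenerate
possibility $d=0$, where $\tilde\sigma_p(\alpha_i)=\eps+\sigma_p(\alpha_i)$ can
never vanish simultaneously with $\sigma_p(\alpha_i)=0$ when $\eps\ne0$, so
$\I_p\cap\I_{\tilde p}=\emptyset$ there — consistent with (c).)

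For (c), assume $\eps\ne0$. If $\I_p\cap\I_{\tilde p}=\{i\}$, then $i\in\I_p$ is
immediate, $\sigma_p(\alpha_i)=0$, and $\tilde\sigma_p(\alpha_i)=0$ gives
$\eps\alpha_i^d=0$; since $\eps\ne0$ this means $\alpha_i^d=0$, hence
$\alpha_i=0$ and $d>0$ (if $d=0$ then $\alpha_i^d=1\ne0$). Conversely, suppose
$i\in\I_p$, $\alpha_i=0$, and $d>0$. Then $\sigma_p(\alpha_i)=0$ and
$\eps\alpha_i^d=\eps\cdot0=0$, so $\tilde\sigma_p(\alpha_i)=0$, giving
$i\in\I_{\tilde p}$, hence $i\in\I_p\cap\I_{\tilde p}$; combined with part~(b)
and $\eps\ne0$ (which excludes $p=\tilde p$, since $\eps\ne0$ means
$\tilde\sigma_p\ne\sigma_p$ as polynomials and a weight argument or a direct
evaluation argument shows $\tilde p\ne p$ — here one uses that changing a single
coefficient of a degree-$\wt(p)$ polynomial over $\F_{2^m}$ changes at least one
evaluation among enough points; more carefully, if $\tilde p=p$ then
$\sigma_p-\tilde\sigma_p=\eps x^d$ vanishes at all $n>t\ge\wt(p)\ge d$ support
points, impossible for a nonzero polynomial of degree $d<n$), we conclude
$\I_p\cap\I_{\tilde p}=\{i\}$.

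\textbf{Main obstacle.} The only delicate point is pinning down, in part~(c)'s
converse direction, why $\eps\ne0$ forces $p\ne\tilde p$ so that the dichotomy
of (b) lands on the ``$\#(\I_p\cap\I_{\tilde p})=1$'' branch rather than the
``$p=\tilde p$'' branch. This needs the observation that $\eps x^d$ is a nonzero
polynomial of degree $d<t<n$, so it cannot vanish at all $n$ pairwise distinct
support points $\alpha_1,\dots,\alpha_n$; hence $\sigma_p$ and $\tilde\sigma_p$
disagree at some $\alpha_i$, forcing $\tilde p\ne p$. Everything else is a
routine unwinding of the characterizations of $\I_p$ and $\I_{\tilde p}$
together with the fact that $x^d$ has $0$ as its only root (and no root at all
when $d=0$).
```
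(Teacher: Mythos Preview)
Your arguments for (a) and (b) are correct and follow essentially the paper's approach; the only cosmetic difference is that the paper handles~(b) uniformly via ``$\eps x^d$ cannot have two distinct zeros'' instead of splitting off the case $d=0$.

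There is, however, a genuine gap in your converse direction of~(c). You try to exclude the branch $p=\tilde p$ of the dichotomy in~(b) by arguing that $\tilde p=p$ would force $\eps x^d=\sigma_p-\tilde\sigma_p$ to vanish at all~$n$ support points. This inference is wrong: $\tilde p=p$ only says that $\sigma_p$ and $\tilde\sigma_p$ have the \emph{same zero set} among $\alpha_1,\dots,\alpha_n$, not that they take the same \emph{values} there, so nothing forces $\eps x^d$ to vanish at any $\alpha_j$ with $j\notin\I_p$. (Your inequality $\wt(p)\ge d$ is also not available in general; a quadratic injection has $d=2>\wt(p)=1$.) Worse, the target claim ``$\eps\ne 0\Rightarrow p\ne\tilde p$'' is false: take $\I_p=\{i\}$ with $\alpha_i=0$, $d=1$, and any $\eps\in\F_{2^m}\setminus\{0,1\}$; then $\tilde\sigma_p(x)=(\eps+1)x$ has $0=\alpha_i$ as its unique root among the support, so $\tilde p=p$ while $\eps\ne 0$.

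The paper does not try to exclude $p=\tilde p$ at all. Instead it argues: if $p=\tilde p$, then every $j\in\I_p=\I_{\tilde p}$ gives a common zero $\alpha_j$ of $\sigma_p$ and $\tilde\sigma_p$, hence a zero of $\eps x^d$, hence $\alpha_j=0$; pairwise distinctness of the support then forces $\#\I_p=1$, so $\I_p\cap\I_{\tilde p}=\{i\}$ in this branch as well. That is the missing step you need to replace your flawed exclusion argument.
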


\begin{proof}
First we prove~(a). Assuming that $\wt(\tilde{p}) > \wt(p)$, we have to show
$\eps\ne 0$. Considering the way in which~$\tilde{p}$ is determined by the zeros
of~$\tilde\sigma_p(x)$, we see that $\deg(\tilde{\sigma}_p(x)) \ge \wt(\tilde{p})$.
Since $\wt(p)=\deg(\sigma_p(x))$, we get $\deg(\tilde{\sigma}_p(x)) > \deg(\sigma_p(x))$,
and hence $\eps\ne 0$.

To show~(b), we consider two cases. In the case $\eps=0$,
we clearly have $\I_p = \I_{\tilde p}$ and thus $p=\tilde p$.
It remains to examine the case $\eps\ne 0$.
For a contradiction, assume that $\# (\I_p\cap \I_{\tilde p}) \ge 2$.
Let $j_1,j_2\in \I_p\cap \I_{\tilde p}$ with $j_1\ne j_2$.
Then $\eps x^d=\tilde{\sigma}_p(x)-\sigma_p(x)$ has the distinct zeros $\alpha_{j_1}$
and $\alpha_{j_2}$. This is a contradiction to the fact that~$\eps x^d$ has no
two distinct zeros.
Consequently, we get $\# (\I_p\cap \I_{\tilde p})\le 1$.

Finally, we prove~(c). Let $\eps\ne 0$. To show the implication ``$\Rightarrow$'',
we note that $\alpha_i$ is a zero of $\tilde{\sigma}_p(x)$ and of $\sigma_p(x)$.
Hence $\alpha_i$ is a zero of $\tilde{\sigma}_p(x) - \sigma_p(x) = \eps x^d$.
As $\eps\ne 0$, this implies $\alpha_i=0$ and $d>0$.

Now we show the reverse implication ``$\Leftarrow$''.
Since $\alpha_i=0$ is a zero of $\sigma_p(x)$, we have $x\mid \sigma_p(x)$,
and thus $x\mid \eps x^d+\sigma_p(x) = \tilde{\sigma}_p(x)$.
Consequently, $\alpha_i$ is a zero of $\tilde{\sigma}_p(x)$, and we get $i\in \I_{\tilde p}$.
Thus we have $i\in \I_p\cap \I_{\tilde p}$, and~(b) says that either $p=\tilde p$
or $\I_p\cap \I_{\tilde p} = \{i\}$.
In the second case, we are already done. So, assume that $p=\tilde p$.
Then we deduce that, for all $j\in \I_p = \I_{\tilde p}$, the element $\alpha_j$
is a zero of both $\sigma_p(x)$ and $\tilde{\sigma}_p(x)$.
Hence~$\alpha_j$ is a zero of $\tilde\sigma_p(x)-\sigma_p(x) = \eps x^d$ for all
$j\in \I_p = \I_{\tilde p}$.
Thus we have $\alpha_j=0$ for all $j\in \I_p$, and since the elements $\alpha_1,\dots,\alpha_n$
are pairwise distinct, we must have $\# (\I_p \cap\I_{\tilde p}) = 1$, as claimed.
\end{proof}

From a \mbox{BIG-N} fault injection one can derive polynomial equations in the unknown
support~$\alpha$, as the next remark explains.

\begin{rem}\label{rmk:genfa}
Let $(p,d,\tilde p)$ be a \mbox{BIG-N} fault injection using the fault $\eps\in\F_{2^m}$,
and assume that $\wt(\tilde{p}) \ge 2$.
Choosing $i,j\in \I_{\tilde{p}}$ with $i\ne j$, we obtain
\[
0 \;=\; \eps\, \alpha_i^d \alpha_j^d + \tprod_{k\in \I_p} \,
(\alpha_i -\alpha_k)\, \alpha_j^d \;=\; \eps\, \alpha_i^d \alpha_j^d + \tprod_{k\in \I_p} \,
(\alpha_j -\alpha_k)\, \alpha_i^d
\]
Therefore the tuple $(\alpha_1,\dots,\alpha_n)$ is a zero of the
polynomial
\[
x_i^d\, \tprod_{k\in \I_p}(x_j-x_k) \;-\; x_j^d\, \tprod_{k\in \I_p}(x_i-x_k)
\]
in $\F_{2^m}[x_1,\dots,x_n]$. Of course, using multiple \mbox{BIG-N} fault injections,
we can generate a polynomial system which has $(\alpha_1,\dots,\alpha_n)$
as one of its $\F_{2^m}$-rational solutions.

Notice that each polynomial has degree $d+\wt(p)$ and involves either
$\wt(p)+1$ or $\wt(p)+2$ indeterminates.
Moreover, from one fault injection we obtain $\binom{\wt(\tilde{p})}{2}$ polynomials,
and we have $\wt(\tilde{p})\le \deg(\tilde{\sigma}_p)\le \max(\wt(p),d)$.
Therefore we should choose both~$d$ and~$\wt(p)$ small, so that the polynomial system
contains only small degree polynomials in relatively few indeterminates.
But note that then only a few equations can be obtained from each injection, and hence
we have to perform a large number of fault injections
in order to obtain a polynomial system that involves all indeterminates.
\end{rem}

In the next section we present two specific \mbox{BIG-N} fault injection classes
which allow us to obtain equations of degree even lower than $d+\wt(p)$.
In particular, we will see how to generate linear and quadratic equations
in merely a few indeterminates.

%%%%%%%%%%%%%%%%%%%%%%%%%%%%%%%%%%%%%%%%%%%%%%%%%%%%%%%%%%%%%%%
%
% Section 5. Constant and Quadratic Fault Injection Sequences
%
%%%%%%%%%%%%%%%%%%%%%%%%%%%%%%%%%%%%%%%%%%%%%%%%%%%%%%%%%%%%%%%

\section{Constant and Quadratic Fault Injection Sequences}

In this section we construct algorithms which repeatedly perform \mbox{BIG-N} fault injections
until we obtain a linear or quadratic polynomial
satisfied by the support tuple $(\alpha_1,\dots,\alpha_n)$.
We continue to use the setting of the preceding section: let
$(K\secr,K\pub,\cC,\cP,\encr,\decr)$ be a \mbox{BIG-N} cryptosystem, and assume that the
decryption map $\decr$ satisfies Assumptions~\ref{ass:1}, \ref{ass:2}, and~\ref{ass:3}.
For $i=1,\dots,n$, let $e^{(i)}$ denote the $i$-th standard basis vector of~$\F_2^n$.
Subsequently, we are mainly interested in the following types of fault injections.

\begin{defi}\label{def:constfaultinj}
Let $(p,d,\tilde{p})$ be a \mbox{BIG-N} fault injection.
\begin{enumerate}
\item[(a)] The fault injection $(p,d,\tilde p)$ is called a {\bf constant injection}
if $d=0$ and we have $p=e^{(i_1)}+e^{(i_2)}\in\F_2^n$ for some $i_1,i_2\in\{1,\dots,n\}$
such that $i_1\ne i_2$.

\item[(d)] The fault injection $(p,d,\tilde p)$ is called a {\bf quadratic injection}
if $d=2$ and we have $p=e^{(i)}\in\F_2^n$ for some $i\in\{1,\dots,n\}$.

\item[(c)] A constant or quadratic injection $(p,d,\tilde p)$ is called {\bf successful}
if we have $\wt(\tilde{p})=2$.
\end{enumerate}
\end{defi}

The term \emph{successful} is adequately chosen, as the following proposition shows.

\begin{prop}\label{prop:faultattackeqn}
Let $(p,d,\tilde p)$ be a \mbox{BIG-N} fault injection.
\begin{enumerate}
\item[(a)] Suppose that a constant injection with $p=e^{(i_1)}+e^{(i_2)}$,
where $i_1,i_2\in\{1,\dots,n\}$ and $i_1\ne i_2$, is successful. Let
$\I_{\tilde{p}}=\{j_1,j_2\}$. Then we have
\[
\alpha_{i_1}+\alpha_{i_2}=\alpha_{j_1}+\alpha_{j_2}
\]

\item[(b)] Suppose that a quadratic fault injection with $p=e^{(i)}$
for some $i\in\{1,\dots,n\}$ is successful. Let $\I_{\tilde{p}}=\{j_1,j_2\}$.
Then we have
\[
\alpha_i\alpha_{j_1} + \alpha_i\alpha_{j_2} + \alpha_{j_1}\alpha_{j_2} \;=\; 0
\]
If, additionally, $\alpha_i\ne 0$, then we also have $\alpha_{j_1}\ne 0$ and $\alpha_{j_2}\ne 0$.
\end{enumerate}
\end{prop}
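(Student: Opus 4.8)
The plan is to specialize the general fault-injection equation from Remark~\ref{rmk:genfa} to the two cases in the definition, and then to read off the stated consequences. In both parts we have a successful injection, so $\wt(\tilde p)=2$, and Proposition~\ref{prop:faulteq2}(a) gives $\eps\ne0$ as soon as $\wt(\tilde p)>\wt(p)$, which holds in part~(b) (there $\wt(p)=1$) and, when $i_1\ne i_2$, also potentially matters in part~(a). For part~(a), $p=e^{(i_1)}+e^{(i_2)}$ and $d=0$, so $\tilde\sigma_p(x)=\eps+(x-\alpha_{i_1})(x-\alpha_{i_2})$. Writing $\I_{\tilde p}=\{j_1,j_2\}$, both $\alpha_{j_1}$ and $\alpha_{j_2}$ are roots of the monic quadratic $\tilde\sigma_p(x)$, so by comparing it with $(x-\alpha_{j_1})(x-\alpha_{j_2})$ and equating the linear coefficient (which is unaffected by the constant perturbation) we get $\alpha_{i_1}+\alpha_{i_2}=\alpha_{j_1}+\alpha_{j_2}$ in characteristic~$2$. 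Alternatively one can invoke Remark~\ref{rmk:genfa} directly with $d=0$: the tuple $(\alpha_1,\dots,\alpha_n)$ is a zero of $\prod_{k\in\I_p}(x_{j_1}-x_k)-\prod_{k\in\I_p}(x_{j_2}-x_k)$, which expands and simplifies to the same linear relation.

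For part~(b), $p=e^{(i)}$ and $d=2$, so $\sigma_p(x)=x-\alpha_i$ and $\tilde\sigma_p(x)=\eps x^2+x-\alpha_i$. Since $\wt(\tilde p)=2>1=\wt(p)$, Proposition~\ref{prop:faulteq2}(a) forces $\eps\ne0$, so $\tilde\sigma_p$ is genuinely quadratic with leading coefficient $\eps$. Its two roots are $\alpha_{j_1},\alpha_{j_2}$, so $\tilde\sigma_p(x)=\eps(x-\alpha_{j_1})(x-\alpha_{j_2})$. Comparing the constant terms gives $\alpha_i=\eps\,\alpha_{j_1}\alpha_{j_2}$ and comparing the linear terms gives $1=\eps(\alpha_{j_1}+\alpha_{j_2})$ (again using characteristic~$2$). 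Eliminating $\eps$ between these — multiply the first by $(\alpha_{j_1}+\alpha_{j_2})$ and substitute from the second — yields $\alpha_i(\alpha_{j_1}+\alpha_{j_2})=\alpha_{j_1}\alpha_{j_2}$, i.e.\ $\alpha_i\alpha_{j_1}+\alpha_i\alpha_{j_2}+\alpha_{j_1}\alpha_{j_2}=0$, which is the claimed quadratic equation. (Again this matches Remark~\ref{rmk:genfa} with $\wt(p)=1$, $d=2$: the support is a zero of $x_{j_1}^2(x_{j_2}-x_i)-x_{j_2}^2(x_{j_1}-x_i)$, and a short factorization of this expression over $\F_{2^m}$ produces $(x_{j_1}-x_{j_2})(x_{j_1}x_{j_2}-x_i x_{j_1}-x_i x_{j_2})$; since $\alpha_{j_1}\ne\alpha_{j_2}$ the first factor is nonzero, forcing the second to vanish.)

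For the final assertion of part~(b), suppose $\alpha_i\ne0$. From $\alpha_i=\eps\,\alpha_{j_1}\alpha_{j_2}$ with $\eps\ne0$, the product $\alpha_{j_1}\alpha_{j_2}$ is nonzero, hence both $\alpha_{j_1}\ne0$ and $\alpha_{j_2}\ne0$. Equivalently, from the quadratic relation $\alpha_{j_1}\alpha_{j_2}=\alpha_i(\alpha_{j_1}+\alpha_{j_2})$: if, say, $\alpha_{j_1}=0$ then the left side is $0$, so $\alpha_i\alpha_{j_2}=0$, and since $\alpha_i\ne0$ we'd get $\alpha_{j_2}=0=\alpha_{j_1}$, contradicting that the support elements are pairwise distinct.

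I do not expect any serious obstacle here; the only point requiring a little care is justifying that $\tilde\sigma_p$ really is a degree-$2$ polynomial with both $\alpha_{j_1}$ and $\alpha_{j_2}$ among its roots (so that the factorization $\tilde\sigma_p(x)=\eps(x-\alpha_{j_1})(x-\alpha_{j_2})$ is legitimate rather than merely $\tilde\sigma_p$ being divisible by $(x-\alpha_{j_1})(x-\alpha_{j_2})$). This is exactly where $\wt(\tilde p)=2$ together with Proposition~\ref{prop:faulteq2}(a) is used: $\wt(\tilde p)=2$ certifies two distinct roots, and the degree bound $\deg\tilde\sigma_p\le\max(\wt(p),d)$ from Remark~\ref{rmk:genfa} caps the degree at $2$, so equality of the two polynomials — not just divisibility — holds. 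The characteristic-$2$ sign conventions (so that $-1=1$ and cross terms behave as stated) are routine and need no comment beyond noting we work over $\F_{2^m}$.
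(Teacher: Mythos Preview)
Your proposal is correct and follows essentially the same approach as the paper: in both parts you write down $\tilde\sigma_p$ explicitly, use that $\alpha_{j_1},\alpha_{j_2}$ are its two distinct roots to factor it, and compare coefficients; your handling of the final nonvanishing claim via $\alpha_i=\eps\,\alpha_{j_1}\alpha_{j_2}$ is also the paper's argument. The only differences are cosmetic: you make the step $\eps\ne0$ in part~(b) explicit via Proposition~\ref{prop:faulteq2}(a) (the paper leaves this implicit), and you offer the Remark~\ref{rmk:genfa} route as an alternative, which the paper does not mention.
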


\begin{proof}
First we show~(a).
Let $\eps\in\F_{2^m}$ be the fault that is used by the fault injection $(p,d,\tilde p)$.
By definition of the error-locator polynomial, we have
$\sigma_p(x)=(x-\alpha_{i_1})(x-\alpha_{i_2})$.
Then $\tilde{\sigma}_p(x)=\sigma_p(x)+\eps$, and for $i\in\{1,\dots,n\}$
we have $\tilde{p}_i=1$ if and only if $\tilde{\sigma}_p(\alpha_i)=0$.
Using $j_1,j_2\in \I_{\tilde p}$, we get $\tilde{\sigma}_p(\alpha_{j_1}) =
\tilde\sigma_p(\alpha_{j_2})=0$. Then $\deg(\tilde{\sigma}_p)=2$ and
$j_1\ne j_2$ yield
$\tilde{\sigma}_p(x)=(x-\alpha_{j_1})(x-\alpha_{j_2})$.
Thus $\tilde{\sigma}_p(x)=\sigma_p(x)+\eps$ implies
\[
x^2+(\alpha_{i_1}+\alpha_{i_2})x+(\alpha_{i_1}\alpha_{i_2}+\eps) \;=\;
x^2+(\alpha_{j_1}+\alpha_{j_2})x+\alpha_{j_1}\alpha_{j_2}
\]
Comparing coefficients yields $\alpha_{i_1}+\alpha_{i_2}=\alpha_{j_1}+\alpha_{j_2}$,
as claimed.

Next we prove~(b). Let $i\in\{1,\dots,n\}$ and $p=e^{(i)}$.
By definition, the tuple $(p,2,\tilde p)$
is a \mbox{BIG-N} fault injection. Let $\eps\in\F_{2^m}$ be the fault used by this fault injection.
By definition of the error-locator polynomial, we have $\sigma_p(x)=x-\alpha_i$.
Hence we have $\tilde{\sigma}_p(x)=\eps x^2+x-\alpha_i$. Notice that, for $i\in\{1,\dots,n\}$,
we have $\tilde{p}_i=1$ if and only if $\tilde{\sigma}_p(\alpha_i)=0$.
Since $j_1,j_2\in \I_{\tilde p}$, we get $\tilde{\sigma}_p(\alpha_{j_1}) =
\tilde{\sigma}_p(\alpha_{j_2})=0$,
and therefore $j_1\ne j_2$ implies
$\eps x^2+x-\alpha_i=\eps (x-\alpha_{j_1})(x-\alpha_{j_2})$.
Comparing coefficients yields $1=-\eps(\alpha_{j_1}+\alpha_{j_2})$ and
$-\alpha_i=\eps\alpha_{j_1}\alpha_{j_2}$.
By multiplying the second equation with $-(\alpha_{j_1}+\alpha_{j_2})$, we get
\[
(\alpha_{j_1}+\alpha_{j_2})\cdot \alpha_i  \;=\; (-\eps(\alpha_{j_1}+\alpha_{j_2}))
\alpha_{j_1}\alpha_{j_2}  \;=\;  -\alpha_{j_1}\alpha_{j_2}
\]
and the first claim of~(b) follows.

To show the second claim, let $\alpha_i\ne 0$. We want to prove that $\alpha_{j_1}\ne 0$
and $\alpha_{j_2}\ne 0$. For a contradiction, assume that $\alpha_{j_1}=0$ or $\alpha_{j_2}=0$.
In both cases we have $\alpha_{j_1}\alpha_{j_2}=0$, and therefore the above equation
yields $-\alpha_i=\eps\alpha_{j_1}\alpha_{j_2}=0$, in contradiction to the fact that $\alpha_i\ne 0$.
Hence we have $\alpha_{j_1}\neq0$ and $\alpha_{j_2}\neq0$.
\end{proof}

Part~(a) of this proposition can now be exploited for a fault injection sequence algorithm
which finds a linear equation for $(\alpha_1,\dots,\alpha_n)$, if it terminates.

\begin{algo}{\bf (A Constant Fault Injection Sequence)}\label{alg:constfinjseq}\\
Given a \mbox{BIG-N} cryptosystem $(K\secr,K\pub,\cC,\cP,\encr,\decr)$ as above and
two distinct indices $i_1,i_2\in\{1,\dots,n\}$, consider the following instructions.
\begin{enumerate}
\item[(1)] Perform a \mbox{BIG-N} fault injection for the word $p=e^{(i_1)}+e^{(i_2)}$ in
degree zero, and let $\tilde{p}\in\F_2^n$ be its output.

\item[(2)] If $\wt(\tilde{p})=2$ and $\tilde{p}\ne p$, then write $\I_{\tilde p}=\{j_1,j_2\}$,
return the polynomial
\[
x_{i_1}+x_{i_2}+x_{j_1}+x_{j_2}\in\F_{2^m}[x_1,\dots,x_n]
\]
and stop. Otherwise, continue with~(1).
\end{enumerate}
This is a Las Vegas algorithm, i.e., it may not terminate, but
if it does terminate, then it returns a linear polynomial
$f\in \F_{2^m}[x_1,\dots,x_n]$ such that $f(\alpha_1,\dots,\alpha_n)=0$.
\end{algo}

\begin{proof}
To prove correctness, it suffices to note that, if the
algorithm stops in step~(2), Proposition~\ref{prop:faultattackeqn}.a
can be applied and yields $\alpha_{i_1}+\alpha_{i_2}=\alpha_{j_1}+\alpha_{j_2}$.
\end{proof}

Naturally, the question arises how many faults have to be injected
on average until this algorithms stops.
It turns out that in the case $n=2^m$ the precise number is given by $\tfrac{2^m}{2^{m-1}-1}\approx 2$.
For the general case $n\leq 2^m$, the probability of a successful constant fault injection
is estimated in Table~\ref{table:avgvar} for a selection of parameters (see
Section~\ref{sec:timings}).

Similarly, also Proposition~\ref{prop:faultattackeqn}.b can be
used via repeated fault injections to gain a quadratic equation satisfied
by $(\alpha_1,\dots,\alpha_n)$.
Moreover, it allows us to check whether $\alpha_i=0$ for a given $i\in\{1,\dots,n\}$.

\begin{algo}{\bf (A Quadratic Fault Injection Sequence)}\label{alg:quadfinjseq}\\
Given a \mbox{BIG-N} cryptosystem $(K\secr,K\pub,\cC,\cP,\encr,\decr)$ as above and
$i\in\{1,\dots,n\}$, consider the following sequence of instructions.
\begin{enumerate}
\item[(1)] Perform a \mbox{BIG-N} fault injection for the word $p=e^{(i)}$ in degree~2, and
let $\tilde{p}\in\F_2^n$ be its output.

\item[(2)] If $\wt(\tilde{p})>1$ and $i\in \I_{\tilde p}$ then
return $x_i\in\F_{2^m}[x_1,\dots,x_n]$  and stop.

\item[(3)] If $\wt(\tilde{p})=2$ then write $\I_{\tilde p}=\{j_1,j_2\}$,
return the polynomial
\[
x_i x_{j_1} + x_i x_{j_2} + x_{j_1} x_{j_2} \in\F_{2^m}[x_1,\dots,x_n]
\]
and stop. Otherwise, continue with~(1).
\end{enumerate}
This is a Las-Vegas algorithm. If it terminates, it returns a linear or
quadratic polynomial $f\in \F_{2^m}[x_1,\dots,x_n]$ such that $f(\alpha_1,\dots,\alpha_n)=0$.
Moreover, if the algorithm stops in step~(3) then we have
$\alpha_i,\alpha_{j_1},\alpha_{j_2} \in \F_{2^m}\setminus \{0\}$.
\end{algo}

\begin{proof}
To show correctness, we distinguish two cases.
If the algorithm terminates in step~(2), it suffices to prove $\alpha_i=0$.
Let $(e^{(i)},2,\tilde p)$ be the quadratic fault injection of step~(1), and let
$\eps\in\F_{2^m}$ be the fault that it uses.
Since $\wt(\tilde{p})>1$, we have $\wt(e^{(i)}) = 1 < \wt(\tilde{p})$,
and hence Proposition~\ref{prop:faulteq2}.a yields $\eps\ne 0$.
Now $i\in \I_{\tilde p}$ and Proposition~\ref{prop:faulteq2}.c imply $\alpha_i=0$,
as claimed. This also proves that, if the algorithm terminates
in step~(3), we must have $\alpha_i\neq0$.

Next, assume that the algorithm terminates in step~(3). We just saw that this
forces~$\alpha_i$ to be non-zero. Let $(e^{(i)},2,\tilde p)$ be the quadratic
fault injection of step~(1). Since $\wt(\tilde{p})=2$, we write $\I_{\tilde{p}}
=\{j_1,j_2\}$ and note that Proposition~\ref{prop:faultattackeqn}.b yields
$\alpha_i\alpha_{j_1}+\alpha_i\alpha_{j_2}+\alpha_{j_1}\alpha_{j_2}=0$.
Thus $(\alpha_1,\dots,\alpha_n)$ is a zero of the given polynomial~$f$.
Moreover, as $\alpha_i\neq0$, we also get $\alpha_{j_1}\ne 0$ and $\alpha_{j_2}\ne 0$
by the same proposition.
\end{proof}

Again, it is not clear how many fault injections are typically required for one
execution of the algorithm. In the case $n=2^m$, it turns out that on average
$\tfrac{2^m}{2^{m-1}-1}\approx 2$ faults  need to be injected if $\alpha_i\neq0$,
and otherwise $\tfrac{2^m}{2^m-1}\approx 1$ faults are required.
For the general setting $n\leq 2^m$, estimates for this number can be found in
Table~\ref{table:avgvar} (see Section~\ref{sec:timings}).

%%%%%%%%%%%%%%%%%%%%%%%%%%%%%%%%%%%%%%
%
%  Section 6: The BIG-N Fault Attack
%
%%%%%%%%%%%%%%%%%%%%%%%%%%%%%%%%%%%%%%

\section{The BIG-N Fault Attack}\label{sec:fatt}

In this section we first derive some simplifications of the systems of polynomial
equations we obtain by performing constant and quadratic fault injection sequences
(see Algorithms~\ref{alg:constfinjseq} and~\ref{alg:quadfinjseq}).
After that we present a strategy to determine all solutions of such a system.
Finally, we explain how one can check whether they can be extended to an
alternative secret pair.
To begin with, consider the following property of Goppa codes.

\begin{prop}\label{prop:calph}
Let $\alpha\in \F_{2^m}^n$ be a support tuple, let $g\in \F_{2^m}[x]$
be a Goppa polynomial for~$\alpha$, and
let $a\in\F_{2^m}\setminus\{0\}$. Then we have
\[
\Gamma(\alpha,g(x)) \;=\;  \Gamma(a\cdot \alpha,g(a^{-1}\cdot x))
\]
\end{prop}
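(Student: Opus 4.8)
The plan is to use the characterization of Goppa-code membership from Remark~\ref{rmk:suppcodegoppapoly}, namely that $c\in\Gamma(\beta,h)$ if and only if $h(x)$ divides $\sum_{i\in\I_c}\prod_{j\in\I_c\setminus\{i\}}(x-\beta_j)$ in $\F_{2^m}[x]$. First I would fix $c=(c_1,\dots,c_n)\in\F_2^n$ and write out this divisibility condition for both codes, with generating pairs $(\alpha,g(x))$ on the one hand and $(a\cdot\alpha,\,g(a^{-1}x))$ on the other. I expect the whole argument to boil down to a change of variable $x\mapsto a^{-1}x$ in the polynomial ring, combined with the fact that $a\ne 0$ so that multiplication by powers of $a$ is a unit operation that does not affect divisibility.

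The key computation is as follows. For the second code, the relevant polynomial is
\[
P(x)\;=\;\tsum_{i\in\I_c}\tprod_{j\in\I_c\setminus\{i\}}(x-a\alpha_j)\;=\;\tsum_{i\in\I_c}\tprod_{j\in\I_c\setminus\{i\}}a\,(a^{-1}x-\alpha_j)\;=\;a^{\,\#\I_c-1}\cdot Q(a^{-1}x),
\]
where $Q(x)=\sum_{i\in\I_c}\prod_{j\in\I_c\setminus\{i\}}(x-\alpha_j)$ is the polynomial attached to the first code. Since $a\ne 0$, the scalar factor $a^{\#\I_c-1}$ is irrelevant for divisibility, so $c\in\Gamma(a\alpha,g(a^{-1}x))$ if and only if $g(a^{-1}x)\mid Q(a^{-1}x)$. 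Now the substitution $x\mapsto a^{-1}x$ is an $\F_{2^m}$-algebra automorphism of $\F_{2^m}[x]$ (with inverse $x\mapsto ax$), and automorphisms preserve divisibility; hence $g(a^{-1}x)\mid Q(a^{-1}x)$ is equivalent to $g(x)\mid Q(x)$, which is exactly the condition $c\in\Gamma(\alpha,g(x))$. This gives the claimed equality of the two codes.

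I would also note briefly, for completeness, that $(a\cdot\alpha,\,g(a^{-1}x))$ really is a legitimate generating pair: the entries $a\alpha_i$ are pairwise distinct because $a\ne 0$ and the $\alpha_i$ are; $\deg g(a^{-1}x)=\deg g$; and $g(a^{-1}\cdot a\alpha_i)=g(\alpha_i)\ne 0$, so $g(a^{-1}x)$ is a Goppa polynomial for $a\cdot\alpha$. There is no real obstacle here — the only thing to be a little careful about is the bookkeeping of the scalar $a^{\#\I_c-1}$ and the observation that in characteristic $2$ (or any field) this is a unit, so it plays no role; everything else is a routine substitution argument.
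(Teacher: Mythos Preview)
Your proposal is correct and follows essentially the same approach as the paper: both use the divisibility characterization of Remark~\ref{rmk:suppcodegoppapoly}, the substitution $x\mapsto a^{-1}x$ viewed as a ring automorphism, and the computation that the two associated polynomials differ by the unit factor $a^{\#\I_c-1}$. The only cosmetic difference is that the paper proves one inclusion and then reapplies it with $a^{-1}$ for the reverse, whereas you establish the equivalence in one stroke by invoking that automorphisms preserve divisibility both ways; your added check that $(a\alpha,\,g(a^{-1}x))$ is a legitimate generating pair is a nice touch the paper omits.
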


\begin{proof}
For $c\in \Gamma(\alpha,g)$, we let $\hats_{c,\alpha}(x) =
\sum_{i\in \I_c}\prod_{j\in \I_c\setminus\{i\}}(x-\alpha_j)$.
By Remark~\ref{rmk:suppcodegoppapoly}, we have $g\mid \hats_{c,\alpha}$.
By applying the ring homomorphism $\Psi_a:\; \F_{2^m}[x] \longrightarrow \F_{2^m}[x]$
defined by $\Psi_a(x)=a^{-1}\cdot x$, we get $\Psi_a(g(x))\mid\Psi_a(\hats_{c,\alpha}(x))$,
where
\[
\Psi_a(\hats_{c,\alpha}(x)))  \;=\; \hats_{c,\alpha}(a^{-1}\cdot x)
\;=\;  a^{-(\wt(c)-1)}\hats_{c,a\cdot \alpha}(x)
\]
Thus $\Psi_a(g)\mid \hats_{c,a\cdot\alpha}$, and using Remark~\ref{rmk:suppcodegoppapoly}
we see that $c\in\Gamma(a\cdot \alpha,\Psi_a(g))$.
Therefore we have $\Gamma(\alpha,g)\subseteq \Gamma(a\cdot \alpha,g(a^{-1}\cdot x))$.

Conversely, we apply this inclusion to the Goppa code $\Gamma(a\alpha,g(a^{-1} x))$
with the factor $a^{-1}$ and obtain $\Gamma(a\cdot\alpha,g(a^{-1}\cdot x))  \subseteq
\Gamma(a^ {-1}\cdot a\cdot \alpha,g(a\cdot a^{-1}\cdot x))=\Gamma(\alpha,g)$.
This finishes the proof.
\end{proof}

In the following we fix a \mbox{BIG-N} cryptosystem
$(K\secr,K\pub,\cC,\cP,\encr,\decr)$ and use the notation introduced
in the preceding sections.

\begin{rem}\label{rem:fixone}
Let $L\subseteq\F_{2^m}[x_1,\dots,x_n]$ be a set of polynomials
obtained by constant and quadratic fault injection sequences.
By
\[
\cZ_{\F_{2^m}}(L) = \{(a_1,\dots,a_n)\in\F_{2^m}^n \mid f(a_1,\dots,a_n)=0
\hbox{\ for all\ } f\in L\}
\]
we denote the zero set of~$L$ in~$\F_{2^m}$. Now we consider the set
\[
S_L \;=\;\{(a_1,\dots,a_n)\in\cZ_{\F_{2^m}}(L)\mid a_i\ne a_j\text{\ for\ }i\ne j\}
\]
Since every $f\in L$ results from
Algorithms~\ref{alg:constfinjseq} or~\ref{alg:quadfinjseq}, we have
$f(\alpha_1,\dots,\alpha_n)=0$, and since~$\alpha =(\alpha_1,\dots,\alpha_n)$
is a support tuple, we even get $\alpha\in S_L$.

Notice that the polynomials in~$L$ are homogeneous. Hence the set~$L$ generates
a homogeneous ideal in $\F_{2^m}[x_1,\dots,x_n]$. Consequently,
for $(a_1,\dots,a_n)\in S_L$ and $b\in\F_{2^m}\setminus\{0\}$ we also have
$(ba_1,\dots,ba_n)\in S_L$. In view of Proposition~\ref{prop:calph} and the fact
that $g(x)$ is irreducible if and only if $g(b^{-1}\cdot x)$ is irreducible,
we can therefore assume without loss of generality
that one non-zero support element $\alpha_i$ with $i\in\{1,\dots,n\}$ is chosen
arbitrarily to be~1.

This means that we may choose an index $i\in\{1,\dots,n\}$ for which we know
that $\alpha_i\ne 0$. Then we consider the dehomogenization of~$L$ with respect
to the indeterminate~$x_i$, i.e., we add the polynomial $x_i-1$ to~$L$.
To determine such an index we can use a quadratic fault injection sequence.
Clearly, it is best to choose the indeterminate that occurs most frequently in the
polynomials of~$L$, in order to simplify the fault equations system as much as possible.
\end{rem}

In view of this remark, we are led to the following definitions.

\begin{defi}\label{def:feqs}
Let $C=\Gamma(\alpha,g)$ be a binary irreducible Goppa code, and let
$(K\secr,K\pub,\cP,\allowbreak \cC,\encr,\decr)$ be a \mbox{BIG-N} cryptosystem which uses~$C$.
\begin{enumerate}
\item[(a)] Let $L_1\subseteq\F_{2^m}[x_1,\dots,x_n]$ be a set of linear polynomials
obtained from constant fault injection sequences (see Algorithm~\ref{alg:constfinjseq}).
Let $L_2\subseteq\F_{2^m}[x_1,\dots,x_n]$ be a set of linear and quadratic polynomials obtained
from quadratic fault injection sequences (see Algorithm~\ref{alg:quadfinjseq}).
Let $i\in\{1,\dots,n\}$ be such that the indeterminate~$x_i$ occurs in some quadratic
polynomial of~$L_2$. Then the set $L_1\cup L_2\cup\{x_i-1\}$
is called a {\bf fault equation system}.

\item[(b)] Given a set of polynomials $L\subseteq\F_{2^m}[x_1,\dots,x_n]$, we call
\[
S_L  \;=\; \{(a_1,\dots,a_n)\in\cZ_{\F_{2^m}}(L)
           \mid a_i\ne a_j\text{\ for\ } i\neq j\}  \;\subseteq\;\F_{2^m}^n
\]
the {\bf support candidate set} of~$L$, and every element $(a_1,\dots,a_n)\in S_L$ is called
a {\bf support candidate} of~$L$.
\end{enumerate}
\end{defi}

Given a fault equation system~$L$, Remark~\ref{rem:fixone}
shows that we may assume $\alpha\in S_L$. Hence the task of finding~$\alpha$
is reduced to solving a suitable fault equation system.
In view of Algorithm~\ref{alg:altkey}, we may reduce this task further.

\begin{defi}
Let $C=\Gamma(\alpha,g)$ be a binary Goppa code, let
$\tilde\alpha\in\F_{2^m}^n$ be a support tuple, and let $u\in\N_+$.
\begin{enumerate}
\item[(a)]  A Goppa polynomial $\tilde g\in\F_{2^m}[x]$ for $\tilde\alpha$ with
$C\subseteq\Gamma(\tilde\alpha,\tilde g)$ and $\deg(\tilde g) = u$ is called a
{\bf degree-$u$ extension} of $\tilde\alpha$ with respect to~$C$.

\item[(b)] A support tuple $\tilde\alpha$ is called {\bf degree-$u$ extendable}
to~$C$ if there exists a degree-$u$ extension of~$\tilde\alpha$ with respect to~$C$.
\end{enumerate}
If the code~$C$ is clear from the context, it may also be omitted.
\end{defi}

\begin{rem}
As explained at the end of Section~\ref{sec:bignc}, one can
apply Algorithm~\ref{alg:altkey} to break a \mbox{BIG-N} cryptosystem
which uses a binary Goppa code $C=\Gamma(\alpha,g)$ with $\deg(g)=t$
as follows: Find a support tuple $\tilde\alpha\in\F_{2^m}^n$
and a Goppa polynomial $\tilde g\in\F_{2^m}[x]$ with
$\deg(\tilde g)=u\ge 2t$ such that $\tilde{g}$ is a degree-$u$ extension
of~$\tilde\alpha$ with respect to~$C$. This is the reason why $(\tilde{\alpha},
\tilde{g})$ is called an alternative secret pair.

In order to find such a support tuple $\tilde\alpha\in\F_{2^m}^n$,
we generate a fault equation system $L\subseteq\F_{2^m}[x_1,\dots,x_n]$
using constant and quadratic fault injection sequences.
Note that $C=\Gamma(\alpha,g^2)$ and $\alpha\in S_L$ imply that
at least one degree-$2t$ extendable support candidate can be found in~$S_L$.
\end{rem}

Based on these observations, we now perform a \mbox{BIG-N} fault attack which
is based on the following three steps:
Firstly, we compute the support candidate set~$S_L$ of a fault equation system~$L$.
Secondly, we determine a degree-$u$ extendable support candidate $\tilde\alpha\in S_L$
with $u\ge 2t$ and its corresponding extension. Finally, we combine everything and
compute an alternative secret pair, thereby breaking the system.

%%%%%%%%%%%%%%%%%%%%%%%%%%%%%%%%%%%%%%%%%%%%%
%
% Subsection 6.1. Finding Support Candidates
%
%%%%%%%%%%%%%%%%%%%%%%%%%%%%%%%%%%%%%%%%%%%%%

\subsection{Finding Support Candidates}\label{sec:findingsuppcand}

Let $L\subseteq\F_{2^m}[x_1,\dots,x_n]$ be a fault equation system.
Then the problem of computing the support candidate set~$S_L$
can be reduced to finding the zero set $\cZ_{\F_{2^m}}(L)$.
Computing this zero set is a classic problem of computer algebra
which can be solved, for instance, using Gr\"obner basis techniques
(see \cite[Sec. 3.7]{kreuzer2008comp} or~\cite[Sec. 6.3]{kreuzer2016comp}).
However, in order to improve the efficiency of these methods,
it is important that we first use the linear equations in~$L$
to eliminate some indeterminates and reduce the complexity of the quadratic equations.
The following algorithm aids this task.

\begin{algo}{\bf (Solving a Fault Equation System)}\label{alg:solve}\\
Let $\mathcal{N}$ be a \mbox{BIG-N} cryptosystem, and let $L\subseteq\F_{2^m}[x_1,\dots,x_n]$
be a fault equation system obtained by applying constant and quadratic fault
injection sequences to~$\mathcal{N}$. Moreover, let $\sigma$ be a term ordering.
Consider the following sequence of instructions.
\begin{enumerate}
\item[(1)] Let $L\lin=\{f\in L\mid \deg(f)=1\}$.

\item[(2)] Interreduce the set $L\lin$ linearly and get a set
$L^{\rm irlin}=\{\ell_1,\dots,\ell_r\}$ such that, for $i=1,\dots,r$, the indeterminate
$\LT_\sigma(\ell_i)$ does not occur in the other polynomials of $L^{\rm irlin}$.
Renumber the indeterminates such that $\LT_\sigma(\ell_i)=x_i$ for $i\in\{1,\dots,r\}$.

\item[(3)] Define a ring homomorphism $\Psi:\; \F_{2^m}[x_1,\dots,x_n]
\longrightarrow \F_{2^m}[x_{r+1},\dots,x_n]$
by $\Psi(x_i)=\ell_i+x_i$ for $i\in\{1,\dots,r\}$, and by
$\Psi(x_i)=x_i$ for $i\in\{r+1,\dots,n\}$.

\item[(4)] Let $L\red=\Psi(L) \setminus \{0\}$.
Compute $S\red = \cZ_{\F_{2^m}}(L\red)$ using Gr\"obner basis techniques.

\item[(5)] Define a map $\psi:\; \F_{2^m}^{n-r} \longrightarrow \F_{2^m}^n$ by
$\psi(\gamma)=(\Psi(x_1)(\gamma),\dots,\Psi(x_n)(\gamma))$.

\item[(6)] Return $S=\{(\tilde\alpha_1,\dots,\tilde\alpha_n)\in \psi(S\red)\mid
\tilde\alpha_i\neq\tilde\alpha_j\text{ for } i\neq j\}$.
\end{enumerate}
This is an algorithm which computes the support candidate set~$S_L$ of~$L$.
\end{algo}

\begin{proof}
Since finiteness is clear, we have to show that the set~$S$
returned in step~(6) is indeed equal to~$S_L$.

First we show the inclusion $S\subseteq S_L$.
Let $\gamma\in S\red$ with $\psi(\gamma)\in S$.
By construction, we have $\Psi(f)\in L\red$ for all $f\in L$.
Hence we get $0=\Psi(f)(\gamma) = f(\Psi(x_1)(\gamma),\dots,\Psi(x_n)(\gamma))
=f(\psi(\gamma))$.
Since this equality holds for all $f\in L$, we conclude that
$\psi(\gamma)\in\cZ_{\F_{2^m}}(L)$. Therefore we have
$\psi(S\red)\subseteq\cZ_{\F_{2^m}}(L)$, and by the construction of~$S$ also $S\subseteq S_L$.

Conversely, let $\tilde\alpha=(\tilde\alpha_1,\dots,\tilde\alpha_n)\in S_L$.
To prove $\alpha\in S$, it suffices to show that $\tilde\alpha=\psi(\gamma)$
and $\gamma\in S\red$ for $\gamma=(\tilde\alpha_{r+1},\dots,\tilde\alpha_n)$.
By the definition of~$\Psi$ and the fact that $\ell_i(\tilde\alpha)=0$ for
$i\in\{1,\dots,r\}$, we have $\Psi(x_i)(\gamma)=\tilde\alpha_i$ for $i\in\{1,\dots,n\}$.
This yields $\psi(\gamma)=(\Psi(x_1)(\gamma),\dots,\Psi(x_n)(\gamma))=
(\tilde\alpha_1,\dots,\tilde\alpha_n)=\tilde\alpha$.

Thus it remains to show that $\gamma\in S\red=\cZ_{\F_{2^m}}(L\red)$.
Note that, by construction, we have $L\red=\{\Psi(f)\mid f\in L\}$.
Using $\Psi(x_i)(\gamma)=\tilde\alpha_i$ for $i\in\{1,\dots,n\}$
and $\tilde\alpha\in S_L$, we have
\[
\Psi(f)(\gamma)=f(\Psi(x_1)(\gamma),\dots,\Psi(x_n)(\gamma)) \;=\;
f(\tilde\alpha_1,\dots,\tilde\alpha_n) \;=\; f(\tilde\alpha) \;=\; 0
\]
for all $\Psi(f)\in L\red$. This proves $\gamma\in \cZ_{\F_{2^m}}(L\red)$,
and hence $\tilde\alpha\in S$.
\end{proof}

%%%%%%%%%%%%%%%%%%%%%%%%%%%%%%%%%%%%%%%%%%%%%%%%%%%%%%
%
% Subsection 6.2. Finding Suitable Goppa Polynomials
%
%%%%%%%%%%%%%%%%%%%%%%%%%%%%%%%%%%%%%%%%%%%%%%%%%%%%%%

\subsection{Finding Suitable Goppa Polynomials}

Recall that the generating pair $(\alpha,g)$ of the irreducible Goppa code $C=\Gamma(\alpha,g)$
is part of the secret key of any \mbox{BIG-N} cryptosystem using~$C$. The code~$C$ itself is
published via the public parity check matrix $H\pub$.
Let $L\subseteq\F_{2^m}[x_1,\dots,x_n]$ be a fault equation system, let $S_L$ be its support
candidate set, let $\tilde{\alpha}\in S_L$ be a support candidate which is degree-$u$
extendable with respect to~$C$ for some $u\ge 2t$, and let $\tilde g\in\F_{2^m}[x]$ be a degree-$u$
extension of~$\tilde{\alpha}$. This means that $C\subseteq\Gamma(\tilde\alpha,\tilde g)$.
By Remark~\ref{rmk:suppcodegoppapoly}, it follows for all $c\in \Gamma(\tilde\alpha,\tilde g)$
that we have
\[
  \tilde{g} \;\mid\; \tsum_{i\in \I_c} \tprod_{j\in \I_c\setminus\{i\}} \, (x-\tilde\alpha_j)
\]
in $\F_{2^m}[x]$.
In particular, this divisibility then holds for all $c\in C$.
Hence, knowing only~$C$ and~$\tilde\alpha$, we can compute multiples
of the desired polynomial~$\tilde g$.
The following algorithm uses these observations and computes an extension~$\tilde g$
of a given support tuple~$\tilde\alpha$ if and only if $\tilde\alpha$ is extendable.
Note that its core idea is based on~\cite[p. 125]{overbeck2009code}.

\begin{algo}[{\tt GoppaGCD}]\label{alg:GoppaGCD}\hfill

{\bf Input:} A support tuple $\tilde\alpha\in\F_{2^m}^n$, $t\in\N_+$,
and a binary Goppa code $C\subseteq\F_2^n$.

{\bf Output:} {\tt Fail}, or a degree-$2t$ extension of $\tilde\alpha$
with respect to~$C$.

\begin{algorithmic}[1]
  \STATE Let $\tilde g:=0$, and let $B$ be an $\F_2$-basis of~$C$.
  \FOR{$c\in B$}
    \STATE $\hats_{c,\tilde\alpha} := \sum_{i\in \I_c}\prod_{j\in \I_c\setminus\{i\}}
           (x-\tilde\alpha_j)$
    \STATE $\tilde g:=\gcd(\tilde g,\hats_{c,\tilde\alpha})$
    \IF{$\deg(\tilde g)<2t$}
       \RETURN{{\tt Fail}}
    \ENDIF
  \ENDFOR
  \FOR{$i=1,\dots,n$}
    \WHILE{$\tilde g(\tilde \alpha_i)=0$}
      \STATE $\tilde g:=\tfrac{\tilde g}{x-\tilde\alpha_i}$
    \ENDWHILE
  \ENDFOR
  \IF{$\deg(\tilde g)<2t$}
     \RETURN{{\tt Fail}}
  \ELSE
    \RETURN{$\tilde{g}$}
  \ENDIF
\end{algorithmic}
\end{algo}

\begin{prop}\label{prop:goppagcd}
Let $\tilde\alpha\in\F_{2^m}^n$ be a support tuple, let $t\in\N_+$,
and let $C\subseteq\F_2^n$ be a binary Goppa code.
Then Algorithm~\ref{alg:GoppaGCD} is finite and the following conditions
are equivalent.
\begin{enumerate}
\item[(a)] The function $\GoppaGCD(\tilde\alpha,t,C)$ returns a degree-$u$
extension of $\tilde\alpha$ with respect to~$C$ for some $u\ge 2t$.

\item[(b)] The tuple $\tilde\alpha$ is degree-$u$ extendable
with respect to~$C$ for some $u\ge 2t$.
\end{enumerate}
Otherwise, the algorithm returns \textnormal{\tt Fail}.
\end{prop}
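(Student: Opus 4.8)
The plan is to establish finiteness first, then prove the equivalence of (a) and (b), and finally observe that whenever neither holds the algorithm outputs \texttt{Fail}. Finiteness is immediate: both \texttt{for}-loops run over finite index sets, the \texttt{while}-loop in the second loop strictly decreases $\deg(\tilde g)$ each iteration (we divide by the linear factor $x-\tilde\alpha_i$), and each gcd computation terminates. So the only real content is the equivalence.

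For the implication (a)$\Rightarrow$(b), suppose $\GoppaGCD(\tilde\alpha,t,C)$ returns a polynomial $\tilde g$. By inspecting the algorithm, $\tilde g$ is obtained as a gcd of the polynomials $\hats_{c,\tilde\alpha}$ for $c$ ranging over a basis $B$ of $C$, followed by repeatedly removing any factor $x-\tilde\alpha_i$ that divides it; moreover the algorithm only returns (rather than failing) when $\deg(\tilde g)\ge 2t$. I would argue: after the first loop, $\tilde g \mid \hats_{c,\tilde\alpha}$ for every $c\in B$, hence by $\F_2$-linearity of $c\mapsto \hats_{c,\tilde\alpha}\bmod \tilde g$ — which is the key observation, since $\hats_{c+c',\tilde\alpha}\equiv \hats_{c,\tilde\alpha}+\hats_{c',\tilde\alpha}$ is exactly the additive identity underlying Remark~\ref{rmk:suppcodegoppapoly} — we get $\tilde g \mid \hats_{c,\tilde\alpha}$ for \emph{all} $c\in C$. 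The second loop guarantees $\tilde g(\tilde\alpha_i)\ne 0$ for all $i$, so $\tilde g$ is genuinely a Goppa polynomial for $\tilde\alpha$; dividing out $x-\tilde\alpha_i$ preserves divisibility of $\hats_{c,\tilde\alpha}$ by $\tilde g$ because $\hats_{c,\tilde\alpha}(\tilde\alpha_i)=0$ whenever $i\in\I_c$, and when $i\notin\I_c$ the factor $x-\tilde\alpha_i$ can be cancelled since $\tilde\alpha_i$ is not a root of $\hats_{c,\tilde\alpha}$... here I should be slightly careful and instead note that after removing the factors we still have $\tilde g \mid \hats_{c,\tilde\alpha}$ for all $c\in C$ because the removed roots $\tilde\alpha_i$ are roots of $\hats_{c,\tilde\alpha}$ exactly on $\I_c$ and off $\I_c$ they are non-roots, so the power of $x-\tilde\alpha_i$ dividing the gcd must already be zero once we reach index $i$ — in other words the real point is just that the final $\tilde g$ still divides every $\hats_{c,\tilde\alpha}$ and has no root among the $\tilde\alpha_i$. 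Then by Remark~\ref{rmk:suppcodegoppapoly}, $C\subseteq \Gamma(\tilde\alpha,\tilde g)$, and since $u:=\deg(\tilde g)\ge 2t$, the tuple $\tilde\alpha$ is degree-$u$ extendable, which is (b).

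For (b)$\Rightarrow$(a), suppose $\tilde\alpha$ is degree-$u$ extendable for some $u\ge 2t$, say $h\in\F_{2^m}[x]$ is a Goppa polynomial for $\tilde\alpha$ with $\deg(h)=u\ge 2t$ and $C\subseteq\Gamma(\tilde\alpha,h)$. By Remark~\ref{rmk:suppcodegoppapoly}, $h\mid \hats_{c,\tilde\alpha}$ for all $c\in C$, so in particular $h$ divides the gcd computed in the first loop; hence that gcd has degree $\ge u\ge 2t$ and the first loop never returns \texttt{Fail}. Next, since $h(\tilde\alpha_i)\ne 0$ for all $i$, the factors $x-\tilde\alpha_i$ removed in the second loop are coprime to $h$, so $h$ still divides $\tilde g$ after the second loop; therefore the final $\deg(\tilde g)\ge \deg(h)\ge 2t$ and the algorithm returns $\tilde g$ rather than \texttt{Fail}. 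Combined with the (a)$\Rightarrow$(b) direction, the returned $\tilde g$ is a degree-$\deg(\tilde g)$ extension with $\deg(\tilde g)\ge 2t$, establishing (a). Finally, the algorithm has exactly two outcomes, returning some polynomial or returning \texttt{Fail}; since ``returns a suitable extension'' is equivalent to (b), the negation of (b) forces the \texttt{Fail} branch, which is the last sentence of the statement.

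The main obstacle I anticipate is the bookkeeping around the second loop: one must make sure that successively dividing out $x-\tilde\alpha_i$ does not destroy the divisibility $\tilde g\mid\hats_{c,\tilde\alpha}$ for all $c\in C$ while correctly using that on the support of $c$ the element $\tilde\alpha_i$ is a root of $\hats_{c,\tilde\alpha}$ and off it is not. The cleanest way around this is to avoid tracking intermediate states and instead argue with the extendability hypothesis $h$: since $h$ is coprime to every $x-\tilde\alpha_i$, it survives the second loop untouched and bounds $\deg(\tilde g)$ from below, while the output $\tilde g$ by construction has no $\tilde\alpha_i$ as a root and still divides every $\hats_{c,\tilde\alpha}$, hence is itself an extension. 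Everything else — finiteness, the $\F_2$-linearity that lifts divisibility from a basis to all of $C$, and the dichotomy of return values — is routine given Remark~\ref{rmk:suppcodegoppapoly}.
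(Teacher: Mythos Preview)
Your overall structure matches the paper's proof: finiteness is immediate, the (b)$\Rightarrow$(a) direction is argued exactly as the paper does (an extension $h$ divides every $\hats_{c,\tilde\alpha}$, hence the running gcd, and survives the second loop because $h$ is coprime to each $x-\tilde\alpha_i$), and for (a)$\Rightarrow$(b) you correctly identify that the returned $\tilde g$ divides each $\hats_{c,\tilde\alpha}$ and has no $\tilde\alpha_i$ as a root, so Remark~\ref{rmk:suppcodegoppapoly} gives $C\subseteq\Gamma(\tilde\alpha,\tilde g)$.

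There is, however, one incorrect step. The map $c\mapsto \hats_{c,\tilde\alpha}\bmod \tilde g$ is \emph{not} $\F_2$-linear: already for $c=e^{(1)}$ and $c'=e^{(2)}$ one has $\hats_{c,\tilde\alpha}=\hats_{c',\tilde\alpha}=1$ while $\hats_{c+c',\tilde\alpha}=\tilde\alpha_1+\tilde\alpha_2\neq 0$. What \emph{is} linear is $c\mapsto \sum_{i\in\I_c}(x-\tilde\alpha_i)^{-1}\bmod\tilde g$, and the two maps share the same kernel once $\tilde g(\tilde\alpha_i)\ne 0$ for all~$i$. The paper sidesteps this entirely by observing that $\tilde g\mid\hats_{c,\tilde\alpha}$ for each basis vector $c\in B$ means $B\subseteq\Gamma(\tilde\alpha,\tilde g)$, and since $\Gamma(\tilde\alpha,\tilde g)$ is a linear code it contains all of $C$; this is the clean fix. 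Your discussion of the second loop also stumbles (in fact $\hats_{c,\tilde\alpha}(\tilde\alpha_i)\ne 0$ for $i\in\I_c$, the opposite of what you wrote), but you correctly abandon that line for the trivial observation that removing factors from $\tilde g$ preserves $\tilde g\mid\hats_{c,\tilde\alpha}$, which is all that is needed and is exactly what the paper uses.
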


\begin{proof}
Since finiteness is clear, it suffices to prove the equivalence of~(a)
and~(b) and the additional claim. To begin with, we show that the output
of the function $\GoppaGCD(\tilde\alpha,t,C)$ is either {\tt Fail} or a
degree-$u$ extension of~$\tilde{\alpha}$ for some $u\ge 2t$.

Clearly, the algorithm ends either in step~(6), (15), or~(17). Hence it terminates
with {\tt Fail} or a polynomial $\tilde g\in\F_{2^m}[x]$.
Suppose that it ends in step~(17) with a polynomial $\tilde g$.
Notice that this entails $\deg(\tilde g)\ge 2t$, as otherwise
the algorithm would have terminated at the latest in step~(15).

Thus it remains to prove $C\subseteq\Gamma(\tilde\alpha,\tilde g)$.
By the loop in steps (9)-(13), we have $\tilde g(\tilde\alpha_i)\neq 0$
for all $i\in\{1,\dots,n\}$.
Therefore~$\tilde g$ is a Goppa polynomial for the support
tuple $\tilde\alpha$ and $\Gamma(\tilde\alpha,\tilde g)$ is a binary Goppa code.
Let $B=\{c_1,\dots,c_k\}$ be the $\F_2$-basis of~$C$ chosen in step~(1).
By the construction of~$\tilde g$ in steps~(2)-(8), we have
$\tilde g\mid \hats_{c_i,\tilde\alpha}$ for all $i\in\{1,\dots,k\}$.
Since in steps (9)-(13) only factors of~$\tilde g$ are removed, these divisibilities
still hold true in step~(17).
Therefore the polynomial $\tilde g$ returned in step~(17) satisfies
$\tilde g\mid \hats_{c_i,\tilde\alpha}$ for all $i\in\{1,\dots,k\}$.
By  Remark~\ref{rmk:suppcodegoppapoly}, this implies that
$c_i\in\Gamma(\tilde\alpha,\tilde g)$ for all $i\in\{1,\dots,k\}$.
Using the fact that~$B$ is an $\F_2$-basis of~$C$, we deduce
$C\subseteq\Gamma(\tilde\alpha,\tilde g)$.

Since the implication (a)$\Rightarrow$(b) is trivially true,
it remains to prove that~(b) implies~(a).
Let $\hat g\in\F_{2^m}[x]$ be a degree-$u$ extension of $\tilde\alpha$,
where $u\ge 2t$.
Let $B=\{c_1,\dots,c_k\}$ be the $\F_2$-basis of~$C$ chosen is step~(1).
From the hypothesis and Remark~\ref{rmk:suppcodegoppapoly}, we get that
$\hat g\mid \hats_{c_i,\tilde\alpha}$ for $i\in\{1,\dots,k\}$.
Since we compute~$\tilde g$ in steps (2)-(8) as the greatest common divisor of
the polynomials $\hats_{c_i,\tilde\alpha}$ for $i\in\{1,\dots,k\}$, we
have $\hat g\mid \tilde g$ in every iteration of the loop.
In particular, this implies that we have $\deg(\tilde g)\ge \deg(\hat g)\ge 2t$
and thus the algorithm does not terminate in this loop.
Now, in steps (9)-(13), all linear factors of the form $(x-\tilde\alpha_i)$ for $i\in\{1,\dots,n\}$
are removed, and thus we get $\tilde g(\tilde\alpha_i)\ne 0$ for all $i\in\{1,\dots,n\}$.
As $\hat g$ is a Goppa polynomial for~$\tilde\alpha$, we also have $\hat g(\tilde\alpha_i)\ne 0$
for $i\in\{1,\dots,n\}$.
Consequently, we have $\hat g\mid \tilde g$, and therefore $\deg(\tilde g)\ge \deg(\hat g)\ge 2t$.
Finally, the algorithm terminates in step~(17) by returning the polynomial $\tilde g\in\F_{2^m}[x]$.
This finishes the proof.
\end{proof}

Note that an $\F_2$-basis $B$ of $C$, as required in step~(1), can be deduced
from the rows of a generator matrix of~$C$.
Moreover, when the degree of the polynomial $\tilde g$ is equal to~$2t$ in the course
of the loop (2)-(8), the polynomial will either stay the same for the remaining
execution of the algorithm, or the result is {\tt Fail}.
Therefore we can also stop the algorithm at this point and use any other method to check
if~$\tilde g$ is a Goppa polynomial for~$\tilde\alpha$
and if we have $C\subseteq \Gamma(\tilde\alpha,\tilde g)$.
For instance, an efficient method is to compare a parity check matrix
of $\Gamma(\tilde\alpha,\tilde g)$ with $H\pub$.

%%%%%%%%%%%%%%%%%%%%%%%%%%%%%%%%%%%%%%%%%%%%%%%%%%%%%
%
% Subsection 6.3. Computing an Alternative Secret Pair
%
%%%%%%%%%%%%%%%%%%%%%%%%%%%%%%%%%%%%%%%%%%%%%%%%%%%%%

\subsection{Computing an Alternative Secret Pair}

In this subsection we combine the algorithms of the previous two
subsections and obtain the following fault attack algorithm which
returns an alternative secret pair.

\begin{algo}{\bf (\mbox{BIG-N} Fault Attack)}\label{alg:attack}\\
Let $m,t,n\in\N_+$ be such that $mt<n\leq 2^m$, let
$C=\Gamma(\alpha,g)$ be a binary irreducible Goppa code with parameters
$(m,t,n)$, and let~$\mathcal{N}$ be a \mbox{BIG-N} cryptosystem which uses~$C$.
Suppose that the implementation of the decryption map satisfies Assumptions~\ref{ass:1},
\ref{ass:2}, and~\ref{ass:3}.

Let $L\subseteq\F_{2^m}[x_1,\dots,x_n]$ be a fault equation system
which has been constructed using constant and quadratic fault injection sequences
applied to~$\mathcal{N}$.
\begin{enumerate}
\item[(1)] Compute the support candidate set $S_L$ of~$L$
using Algorithm~\ref{alg:solve}.

\item[(2)] Choose $\tilde\alpha\in S_L$ and remove it from $S_L$.

\item[(3)] Use Algorithm~\ref{alg:GoppaGCD} to compute $\GoppaGCD(\tilde\alpha,t,C)$.
If the output is {\tt Fail}, then go to step~(2). Otherwise, the output is a
polynomial $\tilde g$. Return the pair $(\tilde\alpha,\tilde g)$ and stop.
\end{enumerate}
This is an algorithm which computes an alternative secret pair $(\tilde\alpha,\tilde g)$.
\end{algo}

\begin{proof}
First we prove finiteness. For the support candidate set~$S_L$ of step~(1),
we may assume $\alpha\in S_L$ by Remark~\ref{rem:fixone}.
Since $S_L\subseteq\F_{2^m}^n$ is a finite set, after finitely many iterations
of steps~(2) and~(3) the support tuple $\tilde\alpha=\alpha\in S_L$ is chosen.
Since $\alpha$ is degree-$2t$ extendable using the extension~$g^2$,
Proposition~\ref{prop:goppagcd} shows that the call to $\GoppaGCD(\tilde\alpha,t,C)$
in step~(3) returns a polynomial $\tilde g\in\F_{2^m}[x]$. Hence the algorithm terminates.

It remains to prove that the output $(\tilde\alpha,\tilde g)$ is an alternative secret pair.
By Proposition~\ref{prop:goppagcd}, the algorithm terminates in step~(3)
if and only if $\tilde\alpha\in S_L$ is degree-$u$ extendable for some $u\ge 2t$.
In this case the polynomial $\tilde g$ is a degree-$u$ extension of~$\tilde\alpha$ for
some $u\geq 2t$, and Algorithm~\ref{alg:altkey} implies that $(\tilde\alpha,\tilde g)$
is an alternative secret pair.
\end{proof}

%%%%%%%%%%%%%%%%%%%%%%%%%%%%%%%%%%%%%%
%
%  Section 7: Experiments and Timings
%
%%%%%%%%%%%%%%%%%%%%%%%%%%%%%%%%%%%%%%

\section{Experiments and Timings}\label{sec:timings}

In this section we apply the \mbox{BIG-N} fault attack (Algorithm~\ref{alg:attack}) to a selection
of state-of-the-art security levels.
Table~\ref{table:secpar} contains a list of recommended parameter choices for the Goppa codes
to be used in the \mbox{BIG-N} cryptosystem along with their claimed security.

\begin{table}[!ht]
  \centering
  \begin{tabular}{lrrr}
    \toprule
    \textbf{Security Level} & \multicolumn{1}{c}{$n$} & \multicolumn{1}{c}{$m$} & \multicolumn{1}{c}{$t$} \\ \midrule
    insec\hfill (60bit) & 1024 & 10 & 38 \\ %\midrule
    short I\hfill (80bit) & 2048 & 11 & 27\\
    short II\hfill (80bit) & 1632 & 11 & 33\\
    mid I\hfill (128bit) & 2960 & 12 & 56\\
    mid II\hfill (147bit) & 3408 & 12 & 67\\ %\midrule
    long I\hfill (191bit) & 4624 & 13 & 95\\
    long II\hfill (256bit) & 6624 & 13 & 115\\
    long III\hfill (266bit) & 6960 & 13 & 119\\
    \bottomrule\\
\end{tabular}
\caption{Security parameters for BIG-N cryptosystems proposed
in~\cite{bernstein2008attacking}.\label{table:secpar}}
\end{table}

Recall that the only input of the \mbox{BIG-N} fault attack is a fault equation system.
It can be computed, for instance, using the following algorithm.
Our experiments show that fault equation systems obtained with this method
can be solved efficiently with Algorithm~\ref{alg:solve}.

\begin{algo}{\bf (Computing a Fault Equation System)}\label{alg:feq}\\
Let $m,t,n\in\N+$ be such that $mt<n\leq 2^m$, let
$C=\Gamma(\alpha,g)$ be a binary irreducible Goppa code with parameters
$(m,t,n)$, and let~$\mathcal{N}$ be a \mbox{BIG-N} cryptosystem which uses~$C$.
Suppose that the implementation of the decryption map satisfies Assumptions~\ref{ass:1},
\ref{ass:2}, and~\ref{ass:3}.
Consider the following sequence of instructions.
\begin{enumerate}
\item[(1)] Let $I:=\{(n,1)\}\cup\{(i,i+1) \mid i\in\{1,\dots,n-1\}\}\subseteq\{1,\dots,n\}^2$.

\item[(2)] For each pair $(i_1,i_2)\in I$, perform a constant fault injection
sequence and collect the resulting linear polynomials
in $L_1\subseteq\F_{2^m}[x_1,\dots,x_n]$.

\item[(3)] Let $I\subseteq\{1,\dots,n\}$ with
$\# I=\lfloor \tfrac{n}{10}\rfloor$ be chosen uniformly at random.

\item[(4)] For each index $i\in I$, perform a quadratic fault injection sequence
and collect the resulting polynomials in $L_2\subseteq\F_{2^m}[x_1,\dots,x_n]$.

\item[(5)] Let $x_i$ be an indeterminate which occurs most often in the terms
of the polynomials of~$L_1$ and which appears in $L_2\setminus\{x_1,\dots,x_n\}$.

\item[(6)] Return $L=L_1\cup L_2\cup\{x_i-1\}$ and stop.
\end{enumerate}
This is a Las-Vegas algorithm which performs $n$ constant fault injection sequences
and $\lfloor\tfrac{n}{10}\rfloor$ quadratic fault injection sequences.
It returns a fault equation system $L\subseteq\F_{2^m}[x_1,\dots,x_n]$.
\end{algo}

In order to determine the expected number of \mbox{BIG-N} fault injections needed
during the computation of a fault equation system using this algorithm,
we denote the probability of a successful constant fault injection by~$p_0$
and the probability of a successful quadratic fault injection by~$p_2$.
Then we expect that $\tfrac{n}{p_0}+\lfloor\tfrac{n}{10}\rfloor\tfrac{1}{p_2}$
faults have to be injected in total to generate the fault equation system using
Algorithm~\ref{alg:feq}.

Since there is no obvious formula to calculate these probabilities,
they were estimated in the following way:
For each security level, we considered three random irreducible Goppa codes.
For each Goppa code, we chose uniformly at random $200$ distinct words~$p$
for the fault injection sequences in step~(2) and~(4), respectively.
For each word~$p$, we computed the exact number of
faults which lead to a successful fault injection. Dividing the average of those
three numbers by~$2^m$ yields our estimates~$\hat{p}_0$ and~$\hat{p}_2$
respectively.

Table~\ref{table:avgvar} contains, both for the constant and the quadratic fault injection
sequences, the average number of faults which lead to a successful fault injection
and the estimated standard deviation. The estimates $\hat{p}_0$
and~$\hat{p}_2$ for~$p_0$ and~$p_2$ are also given.

\begin{table}[!ht]
\centering
\begin{tabular}{lcccrclrcr}
\toprule
{\bf Sec Lvl\;\;} & \multicolumn{3}{c}{\;\bf Parameters\;\;} & \multicolumn{2}{c}{\;\bf succ const inj\;\;}
    & {\;\;$\hat p_0$\;\;} & \multicolumn{2}{c}{\;\;\bf succ quad inj\;\;} & {$\hat p_2$} \\
    & $n$ & $m$ & $t$ & {avg} & {std dev} & {(\%)} & {avg} & {std dev} & {(\%)}\\
\midrule
    insec & 1024 & 10 & 38 & 511.0 & 0.0 & 49.9 & 511.0 & 0.0 & 49.9\\ %\midrule
    short I & 2048 & 11 & 27 & 1023.0 & 0.0 & 50.0 & 1023.0 & 0.0 & 50.0\\
    short II & 1632 & 11 & 33 & 649.5 & 5.3 & 31.7 & 649.5 & 5.2 & 31.7\\
    mid I & 2960 & 12 & 56 & 1067.9 & 9.3 & 26.1 & 1069.2 & 8.8 & 26.1\\
    mid II & 3408 & 12 & 67 & 1416.4 & 6.6 & 34.6 & 1417.0 & 6.5 & 34.6\\ %\midrule
    long I & 4624 & 13 & 95 & 1304.5 & 15.4 & 15.9 & 1303.7 & 15.3 & 15.9\\
    long II & 6624 & 13 & 115 & 2677.5 & 10.0 & 32.7 & 2676.8 & 9.9 & 32.7\\
    long III & 6960 & 13 & 119 & 2955.5 & 8.3 & 36.1 & 2955.8 & 8.8 & 36.2\\
\bottomrule\\
\end{tabular}
\caption{Average numbers of faults and success probabilities.\label{table:avgvar}}
\end{table}
Observe that the success probabilities drop significantly when
the ratio $\tfrac{n}{2^m}$ gets smaller.
This can be attributed to the fact that a constant or quadratic fault injection
is successful if and only if the \emph{faulty} error-locator polynomial $\tilde\sigma_p(x)$
has two zeros among the support elements $\{\alpha_1,\dots,\alpha_n\}$,
and it seems natural that there are more such faults when~$n$ is larger.

An implementation of a \mbox{BIG-N} cryptosystem following the FPGA-based implementation
of~\cite{wang2017fpga,wang2018fpga} is provided in the computer algebra system
CoCoA-5~\cite{CoCoA} along with all algorithms of this paper.
For the computation of the zero set of the \emph{reduced} fault equation system $L\red$
in step~(4) of Algorithm~\ref{alg:solve}, we make use the CoCoA-5 function
{\tt RationalSolve} which uses Gr\"obner basis computations.

Timings of the \mbox{BIG-N} fault attack (Algorithm~\ref{alg:attack}) for all security parameters
of Table~\ref{table:secpar} are given in Table~\ref{table:timings}.
We also list the size of the reduced fault equation system $L\red$,
as computed in step~(4) of Algorithm~\ref{alg:solve}, the time for computing its zero
set $\cZ_{\F_{2^m}}(L\red)$ using the CoCoA-5 function \textnormal{\tt RationalSolve},
and the time for extending a support candidate to an alternative secret pair using
Algorithm~\ref{alg:GoppaGCD}. The timings represent the average
of three runs of the algorithm applied to distinct randomly generated \mbox{BIG-N}
cryptosystems.

\begin{table}[!ht]
\centering
\begin{tabular}{lccccccr}
\toprule
\multirow{2}{*}{\bf sec lvl} & \multicolumn{3}{c}{\bf interreduced $L\red$} & {\bf RatSol}
    & {\bf Alg~\ref{alg:GoppaGCD}} & {\bf total} & {\bf exp no req}\\
    & ind & {eq} & {(s)} & {(s)} & {(s)} & {(s)} & {\bf  fault inj}\\
\midrule
    insec    & 10    & 46    & 4.5   & 2.6  & 13.1  & 20.5   &  2258.41\\
    short I  & 10-11 & 55-56 & 28.4  & 4.1  & 19.0  & 52.9   &  4510.40\\
    short II & 11    & 56    & 15.7  & 4.2  & 19.0  & 39.8   &  5563.51\\
    mid I    & 12    & 67    & 88.0  & 7.1  & 71.2  & 169.5  & 12474.52\\
    mid II   & 11-12 & 66-67 & 119.9 & 6.3  & 101.7 & 231.9  & 10840.84\\
    long I   & 13    & 79    & 304.2 & 13.1 & 240.9 & 566.1  & 31946.16\\
    long II  & 13    & 79    & 866.1 & 13.1 & 438.2 & 1333.8 & 22295.50\\
    long III & 13    & 79    & 938.4 & 13.2 & 481.0 & 1450.9 & 21220.49\\
\bottomrule\\
\end{tabular}
\caption{Timings of the BIG-N fault attack (Algorithm~\ref{alg:attack}).\label{table:timings}}
\end{table}
This table shows that a straightforward implementation of the \mbox{BIG-N} cryptosystem
using classical decoding methods is quite susceptible to the \mbox{BIG-N} fault attack.
Even state-of-the-art security parameters were broken in about 20 minutes.
Therefore we recommended to implement the countermeasures proposed in
Remark~\ref{rmk:countermeasures}.

\subsection*{Acknowledgements}
This research was supported by DFG (German
Research Foundation) project ``Algebraische Fehlerangriffe''
grant {KR 1907/6-2}.

%%%%%%%%%%%%%%%%%%%%%%%%%%%%%%%%%%%%%%
%   Bibliography
%%%%%%%%%%%%%%%%%%%%%%%%%%%%%%%%%%%%%%

\bibliography{BIG-N-Fault-Attack}
\bibliographystyle{plain}

\end{document}